\colorlet{shadecolor}{yellow}
\newtheoremstyle{mystyle}
  {}
  {}
  {\upshape}
  {}
  {\bfseries}
  {:}
  { }
  {}
\theoremstyle{mystyle}
\newtheorem{theorem}{Theorem}
\newtheorem{property}{Property}
\newtheorem{myDef}{Definition}
\begin{document}

\bstctlcite{IEEEexample:BSTcontrol}
    \title{Joint User Association and Resource Allocation for Tailored QoS Provisioning in 6G HetNets}

\author{Yongqin~Fu,~\IEEEmembership{Graduate Student Member,~IEEE,} and 
      Xianbin~Wang,~\IEEEmembership{Fellow,~IEEE}\\

\thanks{Y. Fu and X. Wang are with the Department of Electrical and Computer Engineering, Western University, London, ON N6A 5B9, Canada (e-mail: yfu335@uwo.ca; xianbin.wang@uwo.ca).}
\thanks{Corresponding author: Xianbin Wang.}
}

\maketitle

\thispagestyle{firstpage}

\begin{abstract}
The proliferation of wireless-enabled applications with divergent quality of service (QoS) requirements necessitates tailored QoS provisioning.
With the growing complexity of wireless infrastructures, application-specific QoS perceived by a user equipment (UE) is jointly determined by its association with the supporting base station in heterogeneous networks (HetNets) and the amount of resource allocated to it. However, conventional application-agnostic objective-based user association and resource allocation often ignore the differences among applications' specific requirements for resources, inevitably preventing tailored QoS provisioning. Hence, in this paper, the problem of joint user association and resource allocation with application-specific objectives is investigated for achieving tailored QoS provisioning in 6G HetNets. This problem is intrinsically difficult to solve directly due to the extremely large solution space and the combination of discrete and continuous variables. Therefore, we decompose the original problem into two subproblems, i.e. user association and resource allocation, and propose an interactive optimization algorithm (IOA) to solve them iteratively in an interactive way until convergence is achieved.
Specifically, matching theory is utilized to solve resource allocation and user association is solved heuristically. Extensive experimental results confirm that IOA algorithm outperforms several baseline algorithms in terms of both average utility and UE satisfaction ratio.
\end{abstract}

\begin{IEEEkeywords}
 User association, resource allocation, tailored QoS provisioning, heterogeneous networks, matching theory
\end{IEEEkeywords}

\section{Introduction}
The past ten years have witnessed the proliferation of wireless-enabled applications with divergent QoS requirements. For example, autonomous driving \cite{10413956} and remote surgery \cite{8933555} applications call for ultra-low latency and high reliability. However, ultra-high definition (UHD) video streaming \cite{9068423} and virtual reality (VR) \cite{8713498} applications have stringent requirements in terms of data transmission rate. 

With the evolution of wireless communication technologies, in order to support UEs with different radio access technologies (RATs), different types of base stations (BSs) have been deployed in real-world scenarios, including macro, micro, pico and femto base stations \cite{7994919}. These different types of BSs vary from each other in terms of coverage area, power constraint and so on \cite{6692187}. These different types of BSs form a heterogeneous network (HetNet), which enhances coverage, capacity and energy efficiency \cite{6133859}.

Hence, if a UE is located within an area covered by multiple base stations, it should be decided which base station provide service to this specific UE. This process is commonly known as user association \cite{10234545}. In HetNets, the amounts of available resources of each base station are limited. Hence, when traffic load is high, user association plays a vital role for achieving load balancing among base stations \cite{10419174}, which is necessary for satisfying UEs' QoS requirements. For example, if user association result is extremely unbalanced, some UEs' perceived QoS values might be very bad.

Once user association decision is made, each base station need to decide the amounts of resources to be allocated to each UE associated with it, commonly known as resource allocation \cite{9973061}. Resource allocation directly influences UEs' perceived QoS values. Hence, since that user association and resource allocation jointly influence UEs' perceived QoS values, it is necessary to jointly consider the two tasks in order to better satisfy UEs' QoS requirements.

The diverse QoS requirements of applications bring challenges to user association and resource allocation in HetNets, since that they inevitably complicate the problems of user association and resource allocation and make it even more difficult to satisfy UEs' QoS requirements. 

\subsection{Motivations}
The existing user association and resource allocation techniques were usually designed with the aim of optimizing one or several application-agnostic objectives, such as sum rate, energy efficiency and so on. This might be problematic in the 6G era due to the following two reasons.

First, application-agnostic objective-based user association and resource allocation might cause waste of resources. Let's consider the scenario where there exist two groups of UEs. The first group of UEs have low QoS requirements in terms of data transmission rate but have good channel conditions. The second group of UEs have high QoS requirements in terms of data transmission rate but have poor channel conditions. Then conventional user association and resource allocation techniques with the objective of maximizing sum rates tend to allocate a lot of resources to the first group, far beyond their needs, while allocating the amount of resources to the second group which could not or only meet their minimum QoS requirements. Evidently, this is a waste of resources. 

Second, application-agnostic objective-based user association and resource allocation techniques might bring unfairness among different types of applications. In conventional user association and resource allocation techniques with the objective of maximizing energy efficiency, UEs with autonomous driving application might only receive the minimum amount of physical resource blocks (PRBs) which could just satisfy their minimum QoS requirements, because the higher requirement of reliability inevitably results in lower energy efficiency. However, UEs with UHD video streaming application might be allocated much more PRBs, since that their relative lower requirement of reliability results in relative higher energy efficiency. Absolutely, the fairness between the two applications can hardly be regarded as having been guaranteed.

One common cause of the two drawbacks of conventional application-agnostic objective-based user association and resource allocation techniques is the neglect of differences among demand degrees on resources of different applications, which apparently could not be reflected by one or several application-agnostic objectives.

Therefore, to avoid waste of resources and ensure fairness among different types of applications, tailored QoS provisioning is essential. In tailored QoS provisioning, the differences among demand degrees on resources of different applications are highlighted. Specifically, the optimization objective of user association and resource allocation is shifted from application-agnostic objectives to application-specific objectives. By optimizing the application-specific objectives, the fairness among different types of applications could be better guaranteed. 

\subsection{Related Literature}
Due to the significance of joint user association and resource allocation in downlink HetNets, a lot of research efforts have been devoted to this problem within recent years. It needs to be pointed out that resource allocation considered in this paper involves both spectrum and power resources. The related literature published in recent years are compared in Table \ref{RelatedLiterature}. 

Some researchers aim at maximizing throughput or sum rate or weighted sum rate. 
In \cite{wang2017joint}, the problem of joint user association and resource allocation in a multi-cell multi-association OFDMA HetNet is decomposed into two subproblems and solved alternatively by Hungarian algorithm and the difference of two convex functions approximation (DCA) method, respectively.  The authors in \cite{chen2015joint} propose a distributed belief propagation algorithm for joint user association and resource allocation in downlink HetNets, considering intercell interference coordination. In \cite{10211332}, the user association and resource allocation problem in HetNets is abstracted into a multiple 0/1 knapsack problem without considering interference among different base stations and a performance-improved reduced search space simulated annealing algorithm is proposed. The authors in \cite{cheng2020joint} take user mobility prediction into consideration for joint user association and resource allocation in HetNets and propose a coupling solution and a decoupling solution utilizing multi-agent Q-learning method. Moreover, deep Q-network is applied for accelerating convergence. In \cite{10184114}, the authors investigate the problem of joint user association and beam selection in a HetNet consisted of one macro base station and some small base stations with mmWave frequency, aiming at maximizing sum rate. Moreover, a graph neural network (GNN)-enabled solution with a primal-dual learning framework is proposed.

Some researchers aim at maximizing the sum of logarithmic rates or weighted logarithmic rates. 
The authors in \cite{han2016backhaul} investigate the problem of backhaul-aware joint user association and resource allocation in a HetNet where energy is constrained, which is solved by a distributed algorithm based on Lagrangian dual decomposition. In \cite{vaezpour2019robust}, the problem of joint user association and resource allocation in H-CRAN with dual connectivity is studied. And a block coordinate descent-based algorithm is proposed to solve this problem, utilizing successive convex approximation.   In \cite{liu2019joint}, the problem of joint user association and resource allocation in multi-band millimeter-wave HetNets is investigated, considering both single-band access scheme and multi-band access scheme. 

Some researchers aim at maximizing spectrum or/and energy efficiency. In \cite{somesula2022artificial}, a modified artificial bee colony algorithm is proposed to jointly solve user association and resource allocation in a heterogeneous cloud radio access network (H-CRAN).  The authors in \cite{liu2020joint} propose a Lyapunov optimization framework to jointly solve user association and resource allocation aiming at maximizing spectrum efficiency and energy efficiency while considering stability. In \cite{MUGHEES2023102206}, a multi-agent parameterized deep reinforcement learning approach is proposed for joint user association and power allocation in 5G HetNets. The authors in \cite{10246069} propose a parameterized double deep Q-network for joint user association and power allocation in a two-tier HetNet. 

Some researchers take power consumption into consideration for the optimization objective. In \cite{li2016joint}, the problem of base station operation, user association and resource allocation are jointly optimized aiming at minimizing the average energy consumption of the network, considering the dynamicity of spatial and temporal traffic. In \cite{yin2019energy}, the problem of joint user association and resource allocation in coded cache-enabled HetNets is investigated, considering reducing backhaul transmission through wireless caching.

Some researchers have different optimization objectives. The authors in \cite{10254460} propose a deep reinforcement learning algorithm to optimize user association in a distributed way and maximize the minimum data rate of UEs without global channel state information. The authors in \cite{10264114} propose a dynamic hierarchical game approach for user association and resource allocation in HetNets, considering the scenario where a macro base station provide backhaul services to small cell base stations.
 The authors in \cite{noorivatan2020joint} investigate the problem of joint user association and power-bandwidth allocation in HetNets with the objective of minimizing the mean delay. In \cite{sokun2017novel}, the problem of joint user association and resource allocation allowing opportunistic resource block reuse is investigated, considering two scenarios where user association and power allocation are time-shared or not. The authors in \cite{chaieb2020joint} investigate the problem of joint user association and sub-channel assignment in the scenario where orthogonal and non-orthogonal multiple access techniques coexist, which is solved by a heuristic algorithm with polynomial time complexity.  In \cite{zhao2019deep}, a distributed multi-agent deep reinforcement learning method is proposed for joint user association and resource allocation in HetNets, utilizing dueling double deep-Q network. 

\begin{table*}[htbp]

\centering

\caption{Comparison of Related Literature on Joint User Association and Resource Allocation in Downlink HetNets.}

\label{RelatedLiterature}

\begin{tabular}{c|c|c|c|c|c}

  \hline
   \multirow{2}{*}{\textrm{Reference}} & \multirow{2}{*}{\textrm{Optimization objective}} & \textrm{One macro} & \textrm{Cross-tier} & $\textrm{Single transmitting}$ & $\textrm{Single}$\\
    & &\textrm{base station} & $\textrm{interference}$ & $\textrm{antenna}$ & $\textrm{association}$\\
  \hline
   \cite{wang2017joint} & $\textrm{maximize the weighted sum rate}$ & \ding{55} & \ding{51} & \ding{51} & \ding{55}\\
  \hline
  \cite{chen2015joint} & $\textrm{maximize sum rate}$ & \ding{55} & \ding{51} & \ding{51} & \ding{51}\\
  \hline
  \cite{10211332} & $\textrm{maximize throughput}$ & \ding{55} & \ding{55} & \ding{51} & \ding{51}\\
  \hline
    \cite{cheng2020joint} & $\textrm{maximize the total transmission capacity}$ & \ding{51} & \ding{51} & \ding{55} & \ding{51}\\
  \hline
    \cite{10184114} & $\textrm{maximize sum rate}$ & \ding{51} & \ding{55} & \ding{55} & \ding{51}\\
  \hline
    \cite{han2016backhaul} & $\textrm{maximize the sum of logarithmic rates}$ & \ding{51} & \ding{51} & \ding{51} & \ding{51}\\
  \hline
   \cite{vaezpour2019robust} & $\textrm{maximize the sum of logarithmic rates}$ & \ding{55} & \ding{55} & \ding{51} & \ding{55}\\
  \hline
   \cite{liu2019joint} & $\textrm{maximize the sum of weighted logarithmic rates}$ & \ding{55} & \ding{51} & \ding{51} & \ding{51}\\
  \hline
 \cite{somesula2022artificial} & $\textrm{maximize the energy efficiency}$ & \ding{55} & \ding{51} & \ding{51} & \ding{55}\\
  \hline
    \cite{liu2020joint} & $\textrm{maximize spectrum efficiency and energy efficiency}$ & \ding{51} & \ding{51} & \ding{55} & \ding{55}\\
  \hline
    \cite{MUGHEES2023102206} & $\textrm{maximize energy efficiency}$ & \ding{55} & \ding{51} & \ding{55} & \ding{51}\\
  \hline
  \cite{10246069} & $\textrm{maximize energy efficiency}$ & \ding{51} & \ding{51} & \ding{51} & \ding{51}\\
  \hline
    \cite{li2016joint} & $\textrm{minimize the average energy consumption}$ & \ding{55} & \ding{51} & \ding{51} & \ding{51}\\
  \hline
  \cite{yin2019energy} & $\textrm{minimize total power consumption}$ & \ding{51} & \ding{55} & \ding{51} & \ding{55}\\
  \hline
     \cite{10254460} & $\textrm{maximize the minimum data rate of UEs}$ & \ding{51} & \ding{51} & \ding{51} &\ding{51}\\
  \hline
  \cite{10264114} & $\textrm{optimize wireless backhaul bandwidth allocation and access throughput}$ & \ding{51} & \ding{55} & \ding{51} & \ding{51}\\
  \hline

  \cite{noorivatan2020joint} & $\textrm{minimize the mean delay}$ & \ding{55} & \ding{55} & \ding{51} & \ding{55}\\
  \hline
  \multirow{2}{*}{\cite{sokun2017novel}} & $\textrm{maximize the number of accommodated UEs}$ & \multirow{2}{*}{\ding{55}} & \multirow{2}{*}{\ding{51}} & \multirow{2}{*}{\ding{51}} & \multirow{2}{*}{\ding{51}}\\
  & $\textrm{and minimize the usage of resource blocks}$ & & &\\
  \hline
  \cite{chaieb2020joint} & $\textrm{maximize the total number of associated users}$ & \ding{55} & \ding{55} & \ding{55} & \ding{51}\\
  \hline
  \cite{zhao2019deep} & $\textrm{maximize the long-term downlink utility}$ & \ding{55} & \ding{51} & \ding{51} & \ding{51}\\
  \hline
  
\end{tabular}

\end{table*}

\subsection{Contributions}
In this paper, the problem of application-specific objective-based joint user association and resource allocation for tailored QoS provisioning in downlink 6G HetNets is investigated. It is necessary to point out that although we focus on the downlink transmission scenario of 6G HetNets in this paper, the core idea of the joint user association and resource allocation technique proposed in this paper could be transplanted to the uplink transmission scenario of 6G HetNets and 4G/5G HetNets as well. The optimization objective is set to be optimizing the sum of UEs' application-specific objectives. We formulate this problem as a mixed-integer non-linear programming (MINLP) problem. This problem is inherently difficult to solve directly due to its extremely large solution space caused by its intrinsic NP-hardness and its intrinsic non-convexity caused by the combination of discrete and continuous variables. Therefore,
we decompose the original problem into two subproblems, which are user association and resource allocation, respectively. For resource allocation, both PRB and transmit power are considered. We further decompose the subproblem of resource allocation into two subsubproblems, which are PRB and power allocation with fixed bit error rate (BER), and remaining power allocation.
The contributions of this paper are summarized as follows. 

\begin{enumerate}
   \item A novel concept of tailored QoS provisioning in downlink 6G HetNets is first conceptualized to maximize the summation of the application-specific objectives of all the UEs to ensure the fairness among different UEs with different applications, involving two different resource management tasks, i.e. user association and resource allocation. Moreover, the problem of joint user association and resource allocation for tailored QoS provisioning in downlink 6G HetNets is formulated as a mixed-integer non-linear programming (MINLP) problem.

    \item Due to its intrinsic NP-hardness and non-convexity, the formulated problem is difficult to solve directly. Hence, the original problem is decomposed into two subproblems, i.e. user association and resource allocation. Moreover, an interactive optimization algorithm (IOA) is proposed for solving the subproblems of user association and resource allocation iteratively in an interactive way until convergence is achieved.
    
    \item For PRB and power allocation with fixed BER, an online deferred acceptance (ODA) algorithm based on matching theory is proposed to allocate PRBs with power which could meet the minimum BER requirements of UEs associated with a base station, which is guaranteed to generate a stable solution. In addition, two matching reformulation algorithms are proposed for generating a new stable solution when a new UE is associated with a base station or an associated UE is reassociated with another base station, which could greatly reduce time complexity. For remaining power allocation, a maximum marginal utility descent (MMUD) algorithm is proposed, which divides the remaining transmit power of base station into small pieces and allocate each piece of power to the UE with the maximum marginal utility.

    \item In order to better conduct performance evaluation, extensive experiments with different settings in terms of PBS density and maximum transmit power per PBS are conducted. Simulation results confirm that our proposed IOA algorithm achieves better performance compared with several baseline algorithms in terms of both average utility and UE satisfaction ratio.
\end{enumerate}

\section{System Model}
\label{systemmodelIII}
In this paper, we consider a two-tier HetNet which consists of $M$ macro base stations (MBSs) and $N$ pico base stations (PBSs), denoted by $\mathcal{F} = \{1, \cdots, M, M + 1, \cdots, M + N\}$, where the indexes $1$ to $M$ represent the MBSs and the indexes $M + 1$ to $M + N$ represent the PBSs.

We focus on the downlink transmission scenario of this HetNet and assume that all the base stations employ orthogonal frequency division multiple access (OFDMA). Moreover, we assume that the MBSs are distributed in a planned cellular manner and the PBSs are distributed randomly. We assume that the radii of MBSs and PBSs are denoted by $r_M$ and $r_P$, respectively. Moreover, we assume that the coverage areas of PBSs don't overlap with each other. The coverage area of base station $j$ is denoted by $A_j$. Besides, we denote the maximum transmit power of an MBS and a PBS as $P_{MBS}^{max}$ and $P_{PBS}^{max}$, respectively. And the maximum transmit power of base station $j$ is denoted as $P_j^{max}$.

In addition, we assume that the MBSs and PBSs operate on different frequency bands, since that multiple frequency bands are available in 6G networks which are suitable for different communication scenarios. And using the same frequency band will inevitably bring severe inter-tier interference and degrade the overall network performance. Hence, only intra-tier interference is considered in this paper. Besides, in order to avoid severe intra-tier interference between adjacent MBSs, we assume that reuse-3 scheme \cite{7266476} is employed by MBSs, which is a conventional frequency planning scheme with frequency reuse factor equals 3. Hence, there exist four different frequency bands, denoted by $\mathcal{H} = \{1, 2, 3, 4\}$, and one base station is operated on one frequency band. For base station $j$, we assume that its frequency band is denoted by $h_j$.

For each base station, we assume that its frequency band is divided into $B$ PRBs with equal bandwidth $W$, denoted by $\mathcal{B} = \{1, \cdots, B\}$. We assume that there exist $K$ UEs in the network with heterogeneous QoS requirements, denoted by $\mathcal{K} = \{1, \cdots, K\}$. In addition, we assume that the UEs are distributed in a horizontal area and the coordinate of UE $k$ is denoted by $\phi_k$.

We assume that user association and resource allocation are performed during each time period with length $T$. Moreover, we assume that during each time period, each UE could only be associated with one base station. The user association policy is demonstrated by user association matrix $X$, which is a $K$-by-$(M + N)$ matrix and expressed as:

\begin{equation}
    x_{k,j} = \left \{
    \begin{array}{lr}
    1,\; \textrm{if UE $k$ is associated with base station $j$} \\
    0, \; \textrm{otherwise}
    \end{array}
    \right .
\end{equation}
Hence, the number of UEs associated with base station $j$ is:

\begin{equation}
    K_j = \sum_{k = 1}^K x_{k,j}.
\end{equation}
Moreover, we assume that the set of UEs associated with base station $j$ is denoted by $\mathcal{K}_j = \{k | x_{k,j} = 1, k \in \mathcal{K}\}$.

For each base station, we assume that each PRB could only be allocated to one UE during each time period. Let matrix $\rho_j$ denote the PRB assignment policy of base station $j$, which is a $K$-by-$B$ matrix, expressed as:

\begin{equation}
    \rho_j^{k,b} = \left \{
    \begin{array}{lr}
    1,\; \textrm{if the $b$-th PRB is allocated to UE $k$} \\
    0, \; \textrm{otherwise}
    \end{array}
    \right .
\end{equation}
The data transmission rate of UE $k \in \mathcal{K}$ over PRB $b \in \mathcal{B}$ of base station $j$ can be modelled as follows:

\begin{equation}
    R_{k,j}^b = x_{k,j}\rho_j^{k,b}W\mathrm{log_2}(1+\gamma_{k,j}^b),
\end{equation}
where $\gamma_{k,j}^b$ denotes the signal-to-interference-plus-noise-ratio (SINR), which is:

\begin{equation}
    \gamma_{k,j}^b = \frac{P_j^bg_{k,j}^b}{\sum_{i\in\mathcal{F},i\neq j \land h_i = h_j}P_i^bg_{k,i}^b+\sigma^2},
\end{equation}
where $P_j^b$ is the transmit power of base station $j$ on PRB $b$, $g_{k,j}^b$ is the channel gain accounting for path loss, shadowing and fading from base station $j$ to UE $k$ on PRB $b$. Moreover, $\sigma^2$ denotes the noise power.
Hence, the data rate of UE $k$ is:
\begin{equation}
    R_k = \sum_{j=1}^{M+N}\sum_{p=1}^PR_{k,j}^b.
\end{equation}

We assume that the packet size of UE $k$ is denoted by $s_k$, then the transmission latency of UE $k$ during downlink transmission is:

\begin{equation}
    L_k^{tran} = \frac{s_k}{R_k}.
\end{equation}
We assume that the average latency of the $k$-th UE from the server to its associated base station is denoted by $L_k^{sb}$. The packet arrival of UE $k$ is assumed to follow Poisson process with mean arrival rate $\lambda_k$ packets per transmission time interval (TTI). The length of one TTI is set to be 1 ms. The latency caused by baseband signal processing is neglected as it is not a dominating component. 

Following the modelling of average queuing latency introduced in \cite{9735275}, the queuing of packets of each UE is formulated as an M/G/1 queuing model. Moreover, the average queuing latency of UE $k$ is modelled as:

\begin{equation}
    L_k^{queue} = \frac{\lambda_ks_k^2}{2R_k(R_k - \lambda_ks_k)}
\end{equation}

 Moreover, we assume that the average propagation latency from the associated base station to the $k$-th UE is denoted by $L_k^{prop}$. Hence, the average latency of UE $k$ is:

\begin{equation}
    L_k = L_k^{sb} + L_k^{queue} + L_k^{tran} + L_k^{prop}.
\end{equation}

We assume that the signals are modulated utilizing QPSK, then the bit error rate of UE $k$ over PRB $b \in \mathcal{B}$ of base station $j$ can be expressed as follows:

\begin{equation}
    BER_{k,j}^b = \frac{1}{2}x_{k,j}\mathrm{erfc}(\sqrt{\frac{\gamma_{k,j}^b}{\frac{R_{k,j}^b}{W}}})
    = \frac{1}{2}x_{k,j}\mathrm{erfc}(\sqrt{\frac{W\gamma_{k,j}^b}{R_{k,j}^b}}), \notag
\end{equation}
where $\mathrm{erfc}(x) = \frac{2}{\sqrt{\pi}}\int_x^\infty e^{-\eta^2}d\eta$.
Then the average bit error rate of UE $k$ over all its allocated PRBs is:

\begin{equation}
    BER_k = \frac{\sum_{j=1}^{M+N}\sum_{b=1}^BR_{k,j}^bBER_{k,j}^b}{R_k}.
\end{equation}

We assume that each UE $k \in \mathcal{K}$ has its customized QoS requirements on data transmission rate, latency, and bit error rate, denoted by $R_k^{req}$, $L_k^{req}$ and $BER_k^{req}$, respectively. Moreover, an application-specific objective function is defined for each UE. The application-specific objective function of UE $k$ is expressed as:

\begin{equation}
    U_k = w_k^1\sigma(R_k - R_k^{req}) + w_k^2\sigma(L_k^{req} - L_k),
\end{equation}
where $\sigma(x)$ denotes the sigmoid function, expressed as:

\begin{equation}
    \sigma(x) = \frac{1}{1+e^{-x}}.
\end{equation}
Moreover, we suppose that $w_k^1 + w_k^2 = 1$. Hence, since that $0 < \sigma(x) < 1$, $0 < U_k < 1$.

\section{Problem Formulation}
\label{problemFormulation}
In this paper, we set the optimization objective of the problem of joint user association and resource allocation for tailored QoS provisioning in downlink 6G HetNets as the summation of the application-specific objectives of all the UEs to ensure the fairness among different UEs with different applications.
The joint user association and resource allocation problem is formulated as follows:

\begin{subequations}
\begin{flalign}
& \quad \quad \quad \quad \quad \textrm{(P0)} \; \textrm{Maximize} \quad \sum_{k=1}^{K}U_k\\
 &  \textrm{subject}  \textrm{ to:} \notag\\
    &\sum_{b=1}^BP_j^b \le P_j^{max}, \forall j \in \mathcal{F}\\
    & \sum_{j=1}^{M+N}x_{k,j} = 1, \forall k \in \mathcal{K}, j \in \mathcal{F}\\
    &\sum_{k=1}^{K_j}\sum_{b=1}^B\rho_j^{k,b} \le B, \forall j \in \mathcal{F}\\
    &\rho_j^{k,b}BER_{k, j}^b \le \rho_j^{k,b}BER_k^{req}, \forall j \in \mathcal{F}, \forall k \in \mathcal{K}_j, b \in \mathcal{B} \\
    & x_{k,j} \in \{0,1\}, \forall k \in \mathcal{K}, j \in \mathcal{F} \\
    &\rho_j^{k,b} \in \{0,1\}, \forall k \in \mathcal{K}_j, b \in \mathcal{B}, j \in \mathcal{F}
\end{flalign}
\end{subequations}

In the problem formulation, (13b) constrains that for each base station, the sum of its transmit power allocated to all its PRBs cannot exceed its maximum transmit power constraint. (13c) and (13f) constrain that one UE could be and at most be associated with one base station. (13d) and (13g) constrain that each PRB of each base station could be allocated to at most one UE and for each base station, the sum of PRBs allocated to its UEs cannot exceed its total number of PRBs. (13e) constrains that for any base station, if a PRB is allocated, its BER must satisfy the BER requirement of its corresponding UE. It needs to be pointed out that due to the constrained amounts of available resources of base stations, when the number of UEs is very large, it might be impossible to satisfy each UE's QoS requirements. Hence, we don't constrain the generated solution must satisfy each UE's QoS requirements, in case that no such valid solution could be generated. 

Once user association policy (user association matrix $X$) is generated, the resource allocation problem needs to be solved for each base station. The resource allocation problem of base station $j$ is formulated as follows:

\begin{subequations}
\begin{flalign}
\quad & \quad \textrm{(P1)} \; \textrm{Maximize} \quad \sum_{k \in \mathcal{K}_j}U_k\\
   \textrm{subject} & \textrm{ to:}  \notag\\
    &\sum_{b=1}^BP_j^b \le P_j^{max}\\
    & \sum_{k=1}^{K_j}\sum_{b=1}^B\rho_j^{k,b} \le B\\
    &\rho_j^{k,b}BER_{k, j}^b \le \rho_j^{k,b}BER_k^{req}, \forall k \in \mathcal{K}_j, b \in \mathcal{B}\\
    &\rho_j^{k,b} \in \{0,1\}, \forall k \in \mathcal{K}_j, b \in \mathcal{B}
\end{flalign}
\end{subequations}

\section{Decomposition of Resource Allocation Problem}
From Section \ref{problemFormulation}, we can find that problem P0 is a mixed-integer non-linear programming (MINLP) problem. Hence, it is NP-hard and non-convex. Therefore, in order to solve problem P0, an interactive optimization algorithm is proposed in this paper, whose components and the whole algorithm will be introduced in the following sections. In this section, resource allocation in the initialization stage will be introduced.

According to (14d), if a PRB is allocated, its BER must satisfy the BER requirement of its corresponding UE. Hence, to generate valid solutions, for resource allocation, we decompose it into two subsubproblems, which are PRB and power allocation with fixed BER and remaining power allocation. 
The subsubproblem of PRB and power allocation with fixed BER is formulated as shown below.

\begin{subequations}
\begin{flalign}
\quad & \quad \textrm{(P2)} \; \textrm{Maximize} \sum_{k \in \mathcal{K}_j}U_k\\
\textrm{subject}  & \textrm{ to:}  \notag\\
    &\sum_{b=1}^BP_j^b \le P_j^{max}\\
    & \sum_{k=1}^{K_j}\sum_{b=1}^B\rho_j^{k,b} \le B\\
    &\rho_j^{k,b}BER_{k, j}^b = \rho_j^{k,b}BER_k^{req}, \forall k \in \mathcal{K}_j, b \in \mathcal{B}\\
    & \rho_j^{k,b} \in \{0,1\}, \forall k \in \mathcal{K}_j, b \in \mathcal{B}
\end{flalign}
\end{subequations}

We can find that the difference between P2 and P1 is that in P2, it's constrained that if one PRB is allocated to an UE, its BER must be equal to the BER requirement of its corresponding UE. Once the subproblem P2 is solved, if there exists remaining power, it needs to be allocated to the UEs for maximizing their utilities. The subsubproblem of remaining power allocation is formulated as follows.

\begin{subequations}
\begin{flalign}
 \textrm{(P3)} & \;  \textrm{Maximize} \sum_{k \in \mathcal{K}_j}U_k\\
\textrm{subject}   \textrm{ to:}  & \notag\\
    & \sum_{b=1}^B\Delta P_j^b \le P_j^{rem} \quad \quad \quad \quad \quad \quad \quad \quad \quad 
\end{flalign}
\end{subequations}
where $\Delta P_j^b$ denotes the amount of remaining transmit power allocated to the $b$-th PRB of the $j$-th base station, and $P_j^{rem}$ denotes the total amount of remaining transmit power of base station $j$.

\section{User Association in the Initialization Stage}
In the initialization stage, in order to fully utilize the resources of pico base stations, the UEs located within the coverage areas of PBSs are associated with the corresponding PBSs. Moreover, the remaining UEs are associated with corresponding MBSs. The algorithm of user association in the initialization stage is shown in Alg. \ref{UAInitialization}.

\begin{algorithm}[hbtp]
\caption{User Association in the Initialization Stage}
\label{UAInitialization}
\begin{algorithmic}
\REQUIRE ~~
The coverage areas of base stations, $A_j$, $1 \le j \le M+N$;\\
The location of UEs, $\phi_k$, $1 \le k \le K$;\\
\ENSURE ~~
The user association matrix, $X$;\\
\STATE $X = O_{K\times(M+N)}$;\\
\FOR{$k \in \mathcal{K}$}
\STATE FLAG $=$ \FALSE;\\
\FOR{$M+1 \le j \le M+N$}
\IF{$\phi_k \in A_j$}
\STATE $x_{k,j} = 1$;\\
\STATE FLAG $=$ \TRUE;\\
\STATE \textbf{Break};\\
\ENDIF
\ENDFOR
\IF{FLAG $=$ \FALSE}
\FOR{$1 \le j \le M$}
\IF{$\phi_k \in A_j$}
\STATE $x_{k,j} = 1$;\\
\STATE \textbf{Break};\\
\ENDIF
\ENDFOR
\ENDIF
\ENDFOR
\end{algorithmic}
\end{algorithm}

\section{Resource Allocation in the Initialization Stage} 
In this section, the proposed solutions for resource allocation in the initialization stage will be introduced.

\subsection{PRB and Power Allocation with Fixed BER}
The subsubproblem of PRB and power allocation with fixed BER could be transformed into a 
many-to-one matching game with externality. To formulate the many-to-one matching game, we make some definitions as shown below.

\begin{myDef}
\label{def1}
Given two disjoint sets $\mathcal{B}$ for PRBs of base station $j$ and $\mathcal{K}_j$ for UEs associated with base station $j$, the mapping from PRB to UE is defined as a many-to-one mapping $\mu: \mathcal{B} \rightarrow \mathcal{K}_j$ that satisfies:

1) $\mu(b) \in \mathcal{K}_j \cup \{\emptyset\}, \forall b \in \mathcal{B}$,

2) $|\mu(b)| \in \{0, 1\}, \forall b \in \mathcal{B}$,

3) $\mu^{-1}(k) \in \mathcal{B} \cup \{\emptyset\}, \forall k \in \mathcal{K}_j$,

\noindent where $|\mu(\cdot)|$ is the cardinality of the matching result $\mu(\cdot)$.
\end{myDef}

According to Definition \ref{def1}, we can find that different from conventional many-to-one matching, in this matching, there exists no constraint on the maximum capacity of UEs.  

From Section \ref{systemmodelIII}, we know that the utility, data rate and latency of each UE are totally determined by the set of PRBs allocated to it. Hence, we denote the utility of UE $k$ as $U_k^{\mathcal{B}_k}$, where $\mathcal{B}_k$ denotes the set of PRBs allocated to it. Moreover, we denote the data rate and latency of UE $k$ as $R_k^{\mathcal{B}_k}$ and $L_k^{\mathcal{B}_k}$, respectively. Moreover, let $\delta_k^{\mathcal{B}_k}$ indicate whether the QoS requirements of UE $k$ are satisfied,

\begin{equation}
    \delta_k^{\mathcal{B}_k} = \left \{
    \begin{array}{cc}
        0, & \textrm{if} \; R_k^{\mathcal{B}_k} < R_k^{req} \; \textrm{or} \; L_k^{\mathcal{B}_k} > L_k^{req}  \\
        1, & \textrm{if} \; R_k^{\mathcal{B}_k} \ge R_k^{req} \; \textrm{and} \; L_k^{\mathcal{B}_k} \le L_k^{req}
    \end{array}
    \right .
\end{equation}

\begin{myDef}
\label{def2}
The preference extent of PRB $b$ to UE $k$ is defined as:
\begin{equation}
PE_b^k = \left \{ 
\begin{array}{cc}
     U_k^{\mathcal{B}_k \cup \{b\}} - U_k^{\mathcal{B}_k}, & \textrm{if} \; b \notin \mathcal{B}_k \; \textrm{and} \; \delta_k^{\mathcal{B}_k} = 1\\ 
     U_k^{\mathcal{B}_k} - U_k^{\mathcal{B}_k - \{b\}}, & \textrm{if} \; b \in \mathcal{B}_k \; \textrm{and} \; \delta_k^{\mathcal{B}_k - \{b\}} = 1\\
     2 - U_k^{\mathcal{B}_k}, & \textrm{if} \; b \notin \mathcal{B}_k \; \textrm{and} \;\delta_k^{\mathcal{B}_k} = 0\\
     2 - U_k^{\mathcal{B}_k - \{b\}}, & \textrm{if} \; b \in \mathcal{B}_k \; \textrm{and} \;\delta_k^{\mathcal{B}_k - \{b\}} = 0 \notag
\end{array}
\right .
\end{equation}
\end{myDef}

From Definition \ref{def2}, we can find that if $b \notin \mathcal{B}_k \; \textrm{and} \; \delta_k^{\mathcal{B}_k} = 1$ or $b \in \mathcal{B}_k \; \textrm{and} \; \delta_k^{\mathcal{B}_k - \{b\}} = 1$, $0 < PE_b^k < 1$. And we can find that if $b \notin \mathcal{B}_k \; \textrm{and} \; \delta_k^{\mathcal{B}_k} = 0$ or $b \in \mathcal{B}_k \; \textrm{and} \; \delta_k^{\mathcal{B}_k - \{b\}} = 0$, $PE_b^k > 1$. This ensures that the UEs whose QoS requirements has not been satisfied and the UEs whose QoS requirements will be unsatisfied if PRB $b$ is removed from its PRB set have higher priorities than those UEs whose QoS requirements are satisfied no matter how PRB $b$ is allocated.

\begin{myDef}
\label{def3}
The preference extent of UE $k$ to PRB $b$ is defined as:
\begin{equation}
PE_k^b = \left \{
\begin{array}{cc}
\frac{U_k^{\mathcal{B}_k \cup \{b\}} - U_k^{\mathcal{B}_k}}{\theta_b^k}, \textrm{if} \; b \notin \mathcal{B}_k\\
\frac{U_k^{\mathcal{B}_k} - U_k^{\mathcal{B}_k - \{b\}}}{\theta_b^k}, \textrm{if} \; b \in \mathcal{B}_k
\end{array}
\right .
\end{equation}
where $\theta_b^k$ denotes the amount of power required by PRB $b$ to satisfy the BER requirement of UE $k$.
\end{myDef} 

\begin{myDef}
\label{def4}
PRB $b$ prefers UE $k_1$ to UE $k_2$, if $PE_b^{k_1} > PE_b^{k_2}$, denoted by $k_1 \succ_b k_2$, for $b \in \mathcal{B}$, $k_1, k_2 \in \mathcal{K}_j$, $k_1 \neq k_2$.
\end{myDef}

\begin{myDef}
\label{def5}
UE $k$ prefers PRB $b_1$ to PRB $b_2$, if $PE_k^{b_1} > PE_k^{b_2}$, denoted by $b_1 \succ_k b_2$, for $k \in \mathcal{K}_j$, $b_1, b_2 \in \mathcal{B}$, $b_1 \neq b_2$.
\end{myDef}

\begin{myDef}
\label{def6}
In matching $\mu$, a PRB-UE pair $(b, k)$ is a blocking pair if $k \succ_b \mu(b)$, where $\mu(b)$ represents the UE which PRB $b$ is allocated to.
\end{myDef}
\begin{myDef}
\label{def7}
A matching $\mu$ is stable, if there exists no blocking pair in $\mu$.
\end{myDef}

\begin{algorithm}[hbtp]
\caption{Online Deferred Acceptance Algorithm}
\label{ODA}
\begin{algorithmic}
\STATE 1. Initialization:\\
\STATE Initiate the set of PRBs which has not been allocated to UEs, $\mathcal{B'} = \mathcal{B}$;\\
\FOR{$k \in \mathcal{K}_j$}
\STATE Initiate the set of PRBs allocated to UE $k$, $\mathcal{B}_k = \emptyset$;
\ENDFOR
\FOR{$b \in \mathcal{B'}$}
\FOR{$k \in \mathcal{K}_j$}
\STATE Calculate the amount of power required for PRB $b$ to meet the BER requirement of UE $k$, $\theta_b^k$;\\
\ENDFOR
\ENDFOR

\STATE 2. Matching: \\
\WHILE{$\mathcal{B'} \neq \emptyset$}
\FOR{$b \in \mathcal{B'}$}
\FOR{$k \in \mathcal{K}_j$}
\STATE UE $k$ sends PRB $b$ its utility gain $\Delta U_k^b = U_k^{\mathcal{B}_k \cup \{b\}} - U_k^{\mathcal{B}_k}$ if PRB $b$ is allocated to UE $k$, and the indicator of its QoS requirement satisfaction, $\delta_k^{\mathcal{B}_k}$;\\
\ENDFOR
\STATE PRB $b$ calculates its preference extents on the UEs in set $\mathcal{K}_j$ and randomly selects one UE $k^*$ with the maximum preference extent.
\STATE PRB $b$ applies for UE $k^*$;
\ENDFOR

\FOR{$k \in \mathcal{K}_j$}
\STATE UE $k$ calculates its preference extents on the PRBs applied for it and randomly selects one PRB $b^*$ with the maximum preference extent;
\STATE $\mathcal{B}_k = \mathcal{B}_k \cup \{b^*\}$;
\STATE $\mathcal{B'} = \mathcal{B'} - \{b^*\}$;
\ENDFOR

\ENDWHILE
\STATE 3. Obtain the matching result of PRBs and UEs, $\mu^*$;
\end{algorithmic}
\end{algorithm}

For solving the formulated many-to-one matching game, the Online Deferred Acceptance (ODA) algorithm is proposed, as shown in Alg. \ref{ODA}, whose basic idea is shown as follows:

1. During each iteration, each PRB applies for the UE with the maximum preference extent and the UEs whose QoS requirements haven't been satisfied are given the highest priority;

2. During each iteration, each UE accepts the PRB applied to it with the best channel condition;

3. To deal with the impact of externality, the preference extents of PRBs and UEs are calculated in an online way.

\begin{theorem}
The matching $\mu^*$ resulting from Alg. \ref{ODA} is stable.
\end{theorem}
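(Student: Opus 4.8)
The plan is to argue by contradiction along the lines of the classical stability proof for deferred-acceptance matching, adapted to the PRB-proposing game of Alg.~\ref{ODA}. Suppose $\mu^*$ were not stable. By Definition~\ref{def7} there would exist a blocking pair $(b,k)$, and by Definitions~\ref{def6} and~\ref{def4} this means $PE_b^{k} > PE_b^{\mu^*(b)}$ when both preference extents are evaluated at the final allocation $\mu^*$. Before reaching that, I would record two easy structural facts that make $\mu^*$ a well-defined full matching: in every round of the WHILE loop at least one applying PRB is accepted (any UE receiving an application accepts exactly one), so the loop terminates after finitely many rounds with $\mathcal{B}'=\emptyset$, and hence every PRB is assigned to a unique UE $\mu^*(b)$ with $b\in\mathcal{B}_{\mu^*(b)}$.

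The core step is to trace the round $t^\star$ in which PRB $b$ is removed from $\mathcal{B}'$, i.e. the round in which $b$ becomes permanently matched to $\mu^*(b)$. By construction of the ``PRB applies'' step, $b$ applies in round $t^\star$ to the UE maximizing its preference extent under the allocation current at $t^\star$ (and since $b$ is still unmatched, only the first and third branches of Definition~\ref{def2} are in force). Because $b$ is matched to $\mu^*(b)$ in that round, this forces $PE_b^{\mu^*(b)} \ge PE_b^{k}$ evaluated at round $t^\star$, and once accepted $b$ is never released. If the preference extents were fixed, the proof would already be complete: the inequality $PE_b^{\mu^*(b)}\ge PE_b^{k}$ forced at the moment of matching directly contradicts the blocking-pair inequality $PE_b^{k}>PE_b^{\mu^*(b)}$, so no blocking pair could exist and $\mu^*$ would be stable.

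The real difficulty, and the step I expect to be the main obstacle, is the externality: the preference extents in Definition~\ref{def2} depend on the evolving sets $\{\mathcal{B}_{k'}\}$, so the ordering $PE_b^{\mu^*(b)}\ge PE_b^{k}$ established at round $t^\star$ need not a priori coincide with the ordering at termination, where $PE_b^{\mu^*(b)}$ has become the marginal loss of removing $b$ from $\mu^*(b)$'s final set and $PE_b^{k}$ the marginal gain of adding $b$ to $k$'s final set. To close this gap I would prove a monotonicity/consistency lemma showing that the online recomputation preserves the relative order of these two marginal contributions as later PRBs are appended to $\mathcal{B}_{\mu^*(b)}$ and $\mathcal{B}_{k}$, exploiting that each $U_{k'}$ enters only through the monotone smooth sigmoidal terms $\sigma(R_{k'}-R_{k'}^{req})$ and $\sigma(L_{k'}^{req}-L_{k'})$, so the marginal value of a fixed PRB changes in a controlled way as the set grows; the high-priority cases $PE>1$ (unsatisfied UEs, third and fourth branches of Definition~\ref{def2}) would be handled separately, using that an unsatisfied UE always dominates a satisfied one and that a UE's satisfaction status can only improve as PRBs are added. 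Granting this lemma, the matching-time ordering carries over to termination, the contradiction with the blocking-pair inequality follows, and $\mu^*$ is stable.
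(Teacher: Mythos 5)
Your overall framing (contradiction via a blocking pair, tracing the algorithm's dynamics, worrying about externality) is on the right track, but the step you yourself flag as the crux --- the ``monotonicity/consistency lemma'' --- is where the argument breaks, and it cannot be repaired the way you propose. You anchor the inequality $PE_b^{\mu^*(b)} \ge PE_b^{k}$ at the round $t^\star$ in which $b$ itself is matched, and then hope to show that the \emph{relative order} of the two preference extents is preserved as both $\mathcal{B}_{\mu^*(b)}$ and $\mathcal{B}_k$ continue to grow. No analytic property of the sigmoid delivers this: after $t^\star$ the UE $\mu^*(b)$ may accumulate many more PRBs and slide into saturation, driving its marginal value for $b$ down, while $k$ receives nothing further, so that at termination the order appears to flip. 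What actually rules this scenario out is not the shape of $U_k$ but another acceptance event of the algorithm --- and your outline never invokes it.

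The paper closes the gap by tracing a \emph{different} round: the round $t$ at which the \emph{last} PRB $b'$ is allocated to $\mu^*(b)$. At that round the application rule gives $PE_{b'}^{\mu^*(b)}(t) \ge PE_{b'}^{k}(t)$. Because nothing is added to $\mathcal{B}_{\mu^*(b)}$ afterwards, the ``remove'' branch of Definition~\ref{def2} evaluated at termination coincides exactly with the ``add'' branch evaluated at round $t$, so $PE_{b'}^{\mu^*(b)}$ suffers no subsequent drift --- only the $k$ side decreases, which is the harmless direction. The remaining ingredient, which your proposal also omits, is PRB interchangeability: under the fixed-BER constraint every PRB delivers the same rate increment to a given UE, so $PE_{b'}^{\mu^*(b)} = PE_{b}^{\mu^*(b)}$ and $PE_{b'}^{k} = PE_{b}^{k}$, transferring the inequality from $b'$ to the blocking PRB $b$ and yielding the contradiction. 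Without the last-allocation argument and interchangeability, your plan stalls at a lemma that is false as stated; proving a correct version of it essentially forces you into the paper's route.
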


\begin{proof}
\textbf{(Proof by contradiction)} Assume that there exists one blocking pair $(b, k)$ in the matching result $\mu^*$. Then according to Definition \ref{def6}, we have: $k \succ_b \mu^*(b)$. Then according to Definition \ref{def4}, we have: $PE_b^{k} > PE_b^{\mu^*(b)}$. In order to facilitate the process of proof, we denote the preference extent of PRB $b$ to UE $k$ during iteration $t$ as $PE_b^k(t)$ and denote the set of PRBs allocated to UE $k$ after iteration $t$ as $\mathcal{B}_k(t)$. Suppose that the last PRB allocated to UE $\mu^*(b)$ is PRB $b'$ and it was allocated to UE $\mu^*(b)$ during iteration $t$. Hence, during iteration $t$, $PE_{b'}^{\mu^{*}(b)}(t) \ge PE_{b'}^{k}(t)$. And according to Definition \ref{def2}, we know that $PE_{b'}^{\mu^{*}(b)} = PE_{b'}^{\mu^{*}(b)}(t)$, which are both the utility gain introduced by allocating PRB $b'$ to UE ${\mu^{*}(b)}$. Moreover, because each PRB is allocated the amount of power to just satisfy the BER requirement of its corresponding UE, deleting any PRB from $\mathcal{B}_{\mu^{*}(b)}$ will result in the same amount of utility loss. And allocating any PRB in set $\mathcal{B} - \mathcal{B}_k$ to UE $k$ will result in the same amount of utility gain. Hence, we know $PE_{b'}^{\mu^{*}(b)} = PE_{b}^{\mu^{*}(b)}$ and $PE_{b'}^k = PE_{b}^k$. In addition, due to the property of Sigmoid function, we know $PE_{b'}^{k}(t) \ge PE_{b'}^{k}$, because $|\mathcal{B}_k| \ge |\mathcal{B}_k(t)|$. Hence, we have $PE_b^{\mu^*(b)} \ge PE_b^{k}$, which is a contradiction.

In conclusion, there doesn't exist any blocking pair in $\mu^{*}$ and $\mu^{*}$ is stable.
\end{proof}

\begin{property}
The time complexity of Alg. \ref{ODA} is $\mathcal{O}(K_jB^2)$.
\end{property}

\begin{proof}
In Alg. \ref{ODA}, there exists two procedures at each iteration in the matching phase, which are application and allocation. In the application procedure, the most time-consuming step is selecting one UE with the maximum preference extent for each PRB. This step requires $(K_j - 1)$ comparison operations. In the worst case, in each iteration, only one PRB will be allocated. And the total number of required comparison operations in the application procedure is: $\sum_{i=1}^B(K_j - 1) = \frac{(K_j - 1)B(B+1)}{2}$. In the allocation procedure, each UE selects one PRB with the maximum preference extent from the PRBs which applied to it. In each iteration, the number of required comparison operations equals the number of unallocated PRBs minus the number of UEs which have at least one PRB candidate. In the worst case, in each iteration, all the unallocated PRBs apply to one and the same UE and only one PRB will be allocated. And the total number of required comparison operations in the allocation procedure is: $\sum_{i=1}^B(i - 1) = \frac{B(B-1)}{2}$. Therefore, in the worst case, the total number of required comparison operations of Alg. \ref{ODA} is $\frac{(K_j - 1)B(B+1)}{2} + \frac{B(B-1)}{2}$. Hence, the time complexity of Alg. \ref{ODA} is $\mathcal{O}(K_jB^2)$.
\end{proof}

After the completion of PRB allocation, the required amounts of power are allocated to PRBs for satisfying their BER requirements, as shown in Alg. \ref{PAfixedBER}.

\begin{algorithm}[hbtp]
\caption{Power Allocation with Fixed BER}
\label{PAfixedBER}
\begin{algorithmic}
\REQUIRE ~~
The maximum amount of transmit power of base station $j$, $P_j^{max}$;\\
The matching result of PRBs and UEs associated with base station $j$, $\mu$;\\
The amount of power required for PRB $b \in \mathcal{B}$ to meet the BER requirement of UE $k \in \mathcal{K}_j$, $\theta_b^k$;\\
\ENSURE ~~
The power resource allocation result, $P_j^b$, $b \in \mathcal{B}$;\\
\STATE $P_j = P_j^{max}$;\\
\FOR{$b \in \mathcal{B}$}
\STATE $P_j^b = 0$;\\
\ENDFOR
\STATE Randomly shuffle the set $\mathcal{K}_j$;\\
\FOR{$k \in \mathcal{K}_j$}
\STATE $FLAG = \FALSE$;\\
\FOR{$b \in \mu(k)$}
\IF{$P_j \ge \theta_b^k$}
\STATE $P_j = P_j - \theta_b^k$;\\
\STATE $P_j^b = \theta_b^k$;\\
\ELSE
\STATE $FLAG = \TRUE$;
\STATE \textbf{Break};\\
\ENDIF
\ENDFOR
\IF{$FLAG = \TRUE$}
\STATE \textbf{Break};\\
\ENDIF
\ENDFOR
\end{algorithmic}
\end{algorithm}

\subsection{Remaining Power Allocation}
After solving the subsubproblem P2, if there exists remaining power of the base station, the subsubproblem P3 needs to be solved. In order to solve P3, a Maximum Marginal Utility Descent (MMUD) algorithm is proposed, as shown in Alg. \ref{MMUD}. The basic idea of MMUD algorithm is to divide the remaining power resource into small pieces and allocate each piece of power resource to the PRB with the maximum marginal utility corresponding to the amount of power resource allocated to it at each iteration. Since marginal utility reflects the demand degree on power resource, giving priority to the PRB with the highest marginal utility could result in greater increase of the value of optimization objective. Based on Alg. \ref{PAfixedBER} and \ref{MMUD}, the power allocation algorithm is shown in Alg. \ref{PRA}.

\begin{algorithm}[hbtp]
\caption{Maximum Marginal Utility Descent Algorithm}
\label{MMUD}
\begin{algorithmic}
\STATE 1. Initialization:\\
\STATE Divide the remaining transmit power of base station $j$ into $n$ pieces, each with amount $\alpha = \frac{P^{rem}_j}{n}$;\\
\STATE $i = 1$;\\
\FOR{$k \in \mathcal{K}_j$}
\STATE Calculate the marginal utility corresponding to the amount of power resource allocated to PRB $b$: $\frac{\partial U_k}{\partial P_j^b}$, where $\rho_j^{k,b} = 1$.
\ENDFOR
\STATE 2. Allocation: \\
\FOR{$i \le n$}
\STATE Select one PRB $b^*$ with the maximum marginal utility corresponding to the power resource allocated to it;\\
\STATE $P_j^{b^*} = P_j^{b^*} + \alpha$;\\
\STATE $P^{rem}_j = P^{rem}_j - \alpha$;\\
\STATE Calculate and update the marginal utility corresponding to the amount of power resource allocated to PRB ${b^*}$: $\frac{\partial U_k}{\partial P_j^{b^*}}$, where $\rho_j^{k,{b^*}} = 1$;\\
\STATE $i = i + 1$.\\
\ENDFOR
\end{algorithmic}
\end{algorithm}

\begin{algorithm}[hbtp]
\caption{Power Allocation Algorithm}
\label{PRA}
\begin{algorithmic}
\REQUIRE ~~
The total amount of available power resource of base station $j$, $P_j$;\\
The matching result of PRBs and UEs associated with base station $j$, $\mu$;\\
The amount of power required for PRB $b \in \mathcal{B}$ to meet the BER requirement of UE $k \in \mathcal{K}_j$, $\theta_b^k$;\\
\ENSURE ~~
The power resource allocation result, $P_j^b$, $b \in \mathcal{B}$;\\
\STATE Invoke Alg. \ref{PAfixedBER} to allocate power with fixed BER;\\
\STATE $P_j^{rem} = P_j - \sum_{b \in \mathcal{B}}P_j^b$;\\
\IF{$P_j^{rem} > 0$}
\STATE Invoke Alg. \ref{MMUD} to allocate remaining power to UEs;\\
\ENDIF
\end{algorithmic}
\end{algorithm}

\section{Resource Allocation in the Optimization Stage}
In the optimization stage, user association results and resource allocation results will be changed interactively. Once user association results are changed, resource allocation results need to be changed accordingly. If we use Alg. \ref{ODA} in the initialization stage in each iteration of the optimization stage, the total time complexity will be very high. In order to reduce time complexity, two algorithms are proposed to generate new stable PRB-UE matchings based on the former PRB-UE matchings. Moreover, Alg. \ref{PRA} in the initialization stage is also utilized for power allocation for each base station in the optimization stage.

\subsection{Matching Reformulation with One Deleted UE}
When a UE is removed from the UE set of base station $j$, the PRBs allocated to that UE should be reallocated to other UEs associated with base station $j$. Alg. \ref{md} is proposed to generate a new PRB-UE matching based on the former PRB-UE matching.

\begin{algorithm}[hbtp]
\caption{Matching Reformulation With One Deleted UE}
\label{md}
\begin{algorithmic}
\REQUIRE ~~
The current matching result, $\mu$;\\
The deleted UE, $k'$;\\
\ENSURE ~~
The new matching result, $\mu^*$\\
\STATE 1. Initialization:\\
\STATE Compute the set of PRBs allocated to UE $k'$ in $\mu$, $\mathcal{B}_{k'}$;\\

\FOR{$b \in \mathcal{B}_{k'}$}
\FOR{$k \in \mathcal{K}_j$}
\STATE Calculate the amount of power required for PRB $b$ to meet the BER requirement of UE $k$, $\theta_b^k$;\\

\ENDFOR
\ENDFOR

\STATE 2. Matching:\\
\WHILE{$\mathcal{B}_{k'} \neq \emptyset$}
\FOR{$b \in \mathcal{B}_{k'}$}
\FOR{$k \in \mathcal{K}_j$}
\STATE UE $k$ sends PRB $b$ its utility gain $\Delta U_k^b = U_k^{\mathcal{B}_k \cup \{b\}} - U_k^{\mathcal{B}_k}$ if PRB $b$ is allocated to UE $k$, and the indicator of its QoS requirement satisfaction, $\delta_k^{\mathcal{B}_k}$;\\
\ENDFOR
\STATE PRB $b$ calculates its preference extents on the UEs in set $\mathcal{K}_j$ and randomly selects one UE $k^*$ with the maximum preference extent.
\STATE PRB $b$ applies for UE $k^*$;
\ENDFOR

\FOR{$k \in \mathcal{K}_j$}
\STATE UE $k$ calculates its preference extents on the PRBs applied for it and randomly selects one PRB $b^*$ with the maximum preference extent;
\STATE $\mathcal{B}_k = \mathcal{B}_k \cup \{b^*\}$;
\STATE $\mathcal{B}_{k'} = \mathcal{B}_{k'} - \{b^*\}$;
\ENDFOR
\ENDWHILE
\end{algorithmic}
\end{algorithm}

\begin{theorem}
The matching $\mu^*$ resulting from Alg. \ref{md} is stable as long as the input matching $\mu$ is stable.
\end{theorem}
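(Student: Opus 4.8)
The plan is to argue by contradiction in the same spirit as the stability proof for Alg.~\ref{ODA}, but to exploit the fact that Alg.~\ref{md} only \emph{adds} freed PRBs to the surviving UEs and never removes a PRB from any of them. First I would record two structural facts that drive the whole argument. (i) By the equal-marginal observation already used for Alg.~\ref{ODA} --- each PRB is powered to just meet its UE's BER requirement, so $U_k$ depends only on the cardinality $|\mathcal{B}_k|$ and every PRB held by UE $k$ contributes the same marginal utility --- the preference extent $PE_b^k$ depends only on $|\mathcal{B}_k|$, not on which particular $b$ is involved. (ii) By the concavity of the sigmoid, this common marginal value is non-increasing in $|\mathcal{B}_k|$, so whenever a UE's PRB set grows its preference extents can only decrease; the boosted values $PE_b^k>1$ for unsatisfied UEs behave the same way, since adding PRBs can only turn $\delta_k$ from $0$ to $1$, never the reverse.

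Next I would suppose $(b,k)$ is a blocking pair in $\mu^*$, i.e. $PE_b^k > PE_b^{\mu^*(b)}$ by Definitions~\ref{def6} and~\ref{def4}, and split on whether $b$ lies in the freed set $\mathcal{B}_{k'}$. Since Alg.~\ref{md} only reassigns PRBs inside $\mathcal{B}_{k'}$, every $b\notin\mathcal{B}_{k'}$ keeps its holder, $\mu^*(b)=\mu(b)$, while each surviving UE's count only grows. \textbf{Case 1 ($b\in\mathcal{B}_{k'}$):} the freed PRBs are redistributed by exactly the deferred-acceptance loop of Alg.~\ref{ODA}, so the earlier stability argument applies (the only change being that UEs enter the loop with their retained PRBs, whose counts still only increase). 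Letting $b'$ be the last freed PRB accepted by $\mu^*(b)$ at some iteration $t$, acceptance gives $PE_{b'}^{\mu^*(b)}(t)\ge PE_{b'}^{k}(t)$; fact (i) gives $PE_{b'}^{\mu^*(b)}=PE_b^{\mu^*(b)}$ and $PE_{b'}^{k}=PE_b^{k}$, while fact (ii) with $|\mathcal{B}_k|\ge|\mathcal{B}_k(t)|$ gives $PE_{b'}^{k}(t)\ge PE_{b'}^{k}$; chaining yields $PE_b^{\mu^*(b)}\ge PE_b^{k}$, a contradiction.

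\textbf{Case 2 ($b\notin\mathcal{B}_{k'}$):} here the subtlety is that the preference extents at $\mu^*$ differ from those at $\mu$ because some UEs' counts have changed, so stability cannot be read off an acceptance step directly. I would split once more. If the holder $\mu(b)$ gained no freed PRB, then $PE_b^{\mu(b)}$ is unchanged, and chaining the stability of $\mu$ with fact (ii) gives $PE_b^{\mu(b)}|_{\mu^*}=PE_b^{\mu(b)}|_{\mu}\ge PE_b^{k}|_{\mu}\ge PE_b^{k}|_{\mu^*}$, contradicting the blocking assumption. If instead $\mu(b)$ did acquire some freed PRB $b''\in\mathcal{B}_{k'}$, I would use fact (i) to transfer the certificate from $b''$ to $b$: since all of $\mu(b)$'s PRBs share one preference extent, $PE_b^{\mu(b)}=PE_{b''}^{\mu(b)}$, and likewise $PE_b^{k}=PE_{b''}^{k}$; applying the Case~1 conclusion to the freed PRB $b''$ then yields $PE_b^{\mu(b)}=PE_{b''}^{\mu(b)}\ge PE_{b''}^{k}=PE_b^{k}$, again a contradiction. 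As all cases are impossible, $\mu^*$ has no blocking pair and is stable.

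I expect the main obstacle to be Case~2, because the PRB under scrutiny may never have entered the reallocation loop, so its stability cannot come from an acceptance comparison. The decisive ingredients are the monotone decay of preference extents under PRB accumulation (fact (ii)) and the equal-marginal identity (fact (i)), which lets every never-reassigned PRB borrow the stability certificate of a freshly assigned freed PRB held by the same UE. Checking that the boosted preference values for QoS-unsatisfied UEs obey the same monotonicity --- and that $\delta_k$ can only improve as PRBs are added --- is the routine but necessary verification that makes fact (ii) fully rigorous.
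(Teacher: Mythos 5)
Your proposal is correct and takes essentially the same route as the paper's proof: a proof by contradiction driven by the equal-marginal identity (preference extents depend only on $|\mathcal{B}_k|$), the monotone decay of preference extents as a UE accumulates PRBs, the acceptance inequality at the iteration where the last freed PRB was assigned, and the stability of the input matching $\mu$. Your case split on whether $b$ lies in the freed set $\mathcal{B}_{k'}$ (and whether $\mu(b)$ acquired a freed PRB) versus the paper's split on whether $\mathcal{B}_k^\mu=\mathcal{B}_k^{\mu^*}$ and $\mathcal{B}_{\mu^*(b)}^\mu=\mathcal{B}_{\mu^*(b)}^{\mu^*}$ is only an organizational difference; the underlying sub-arguments coincide.
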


\begin{proof}
\textbf{(Proof by contradiction)} In order to facilitate the process of proof, we denote the set of PRBs allocated to UE $k$ in the matching $\mu$ as $\mathcal{B}_k^\mu$. Moreover, we denote the preference extent of PRB $b$ to UE $k$ in matching $\mu$ as $PE_b^k|_\mu$. Assume that there exist one blocking pair $(b, k)$ in the matching result $\mu^*$. Then according to Definition \ref{def6}, we have: $k \succ_b \mu^*(b)$. Then according to Definition \ref{def4}, we have: $PE_b^{k}|_{\mu^*} > PE_b^{\mu^*(b)}|_{\mu^*}$. 
There exists three scenarios to be considered. 

\textbf{Scenario 1:} $\mathcal{B}_k^\mu = \mathcal{B}_k^{\mu^*}$ and $\mathcal{B}_{\mu^*(b)}^\mu = \mathcal{B}_{\mu^*(b)}^{\mu^*}$.

In this scenario, there exists one blocking pair $(b, k)$ in matching $\mu$. Hence, matching $\mu$ is unstable, which is a contradiction.

\textbf{Scenario 2:} $\mathcal{B}_k^\mu \neq \mathcal{B}_k^{\mu^*}$ and $\mathcal{B}_{\mu^*(b)}^\mu = \mathcal{B}_{\mu^*(b)}^{\mu^*}$.

In this scenario, we know that additional PRBs are allocated to UE $k$ in matching $\mu^*$. Since matching $\mu$ is stable, we know that $PE_b^k|_\mu \le PE_b^{\mu(b)}|_\mu$. Because additional PRBs are allocated to UE $k$ in matching $\mu^*$, due to the property of Sigmoid function, we know $PE_b^k|_{\mu^*} \le PE_b^k|_\mu$. And since no more PRB is allocated to UE $\mu(b)$ in matching $\mu^*$, $PE_b^{\mu^*(b)}|_{\mu^*} = PE_b^{\mu^*(b)}|_\mu$. Hence, $PE_b^{k}|_{\mu^*} < PE_b^{\mu^*(b)}|_{\mu^*}$, which is a contradiction.

\textbf{Scenario 3:} $\mathcal{B}_{\mu^*(b)}^\mu \neq \mathcal{B}_{\mu^*(b)}^{\mu^*}$.

In this scenario, we know that additional PRBs are allocated to UE $\mu^*(b)$ in matching $\mu^*$. Suppose that the last PRB allocated to UE $\mu^*(b)$ is PRB $b'$. Hence, we know $PE_{b'}^{\mu^*(b)} \ge PE_{b'}^{k}$. Because $PE_{b'}^{\mu^*(b)} = PE_{b}^{k}$ and $PE_{b'}^{\mu^*(b)} = PE_{b}^{k}$, we know that $PE_{b}^{\mu^*(b)} \ge PE_{b}^{k}$, which is a contradiction.

In conclusion, there doesn't exist any blocking pair in matching $\mu^*$ and matching $\mu^*$ is stable.

\end{proof}

\begin{property}
The time complexity of Alg. \ref{md} is $\mathcal{O}(K_j|\mathcal{B}_{k'}|^2$), where $|\mathcal{B}_{k'}|$ denotes the cardinality of set $\mathcal{B}_{k'}$.
\end{property}

\begin{proof}
From Alg. \ref{md}, we know that the most time-consuming part is the matching phase. In the matching phase, there exists two procedures at each iteration in the matching phase, which are application and allocation. In the application procedure, the most time-consuming step is selecting one UE with the maximum preference extent for each PRB. This step requires $(K_j - 1)$ comparison operations. In the worst case, in each iteration, only one PRB will be allocated. And the total number of required comparison operations in the application procedure is: $\sum_{i=1}^{|\mathcal{B}_{k'}|}(K_j - 1) = \frac{(K_j - 1)|\mathcal{B}_{k'}|(|\mathcal{B}_{k'}|+1)}{2}$. In the allocation procedure, each UE selects one PRB with the maximum preference extent from the PRBs which applied to it. In each itertaion, the number of required comparison operations equals the number of unallocated PRBs minus the number of UEs which have at least one PRB candidate. In the worst case, in each iteration, all the unallocated PRBs apply to one and the same UE and only one PRB will be allocated. And the total number of required comparison operations in the allocation procedure is: $\sum_{i=1}^{|\mathcal{B}_{k'}|}(i - 1) = \frac{|\mathcal{B}_{k'}|(|\mathcal{B}_{k'}|-1)}{2}$. Therefore, in the worst case, the total number of required comparison operations of the ODA algorithm is $\frac{(K_j - 1)|\mathcal{B}_{k'}|(|\mathcal{B}_{k'}|+1)}{2} + \frac{|\mathcal{B}_{k'}|(|\mathcal{B}_{k'}|-1)}{2}$. Hence, the time complexity of Alg. \ref{md} is $\mathcal{O}(K_j|\mathcal{B}_{k'}|^2)$.
\end{proof}

\subsection{Matching Reformulation with One Added UE}
When an UE is added to the UE set of base station $j$, some PRBs allocated to the UEs in the original UE set need to be reallocated to this UE. Alg. \ref{ma} is proposed to generate a new PRB-UE matching based on the former PRB-UE matching.

\begin{algorithm}[hbtp]
\caption{Matching Reformulation with One Added UE}
\label{ma}
\begin{algorithmic}
\REQUIRE ~~
The current matching result, $\mu$;\\
The added UE, $k'$;\\
\ENSURE ~~
The new matching result, $\mu^*$\\
\STATE 1. Initialization:\\
\STATE $\mathcal{B}_{k'} = \emptyset$
\STATE 2. Matching:\\
\WHILE{\TRUE}
\STATE $\mathcal{B'} = \emptyset$;
\FOR{$k \in \mathcal{K}_j$}
\STATE Randomly select one PRB $b$ allocated to UE $k$;
\STATE $\mathcal{B'} = \mathcal{B'} \cup \{b\}$;
\STATE Calculate the amount of power required for PRB $b$ to meet the BER requirement of UE $k'$, $\theta_b^{k'}$;
\ENDFOR
\FOR{$b \in \mathcal{B'}$}
\STATE UE $k'$ sends PRB $b$ its utility gain $\Delta U_{k'}^b = U_{k'}^{\mathcal{B}_{k'} \cup \{b\}} - U_{k'}^{\mathcal{B}_{k'}}$ if PRB $b$ is allocated to UE $k'$, and the indicator of its QoS requirement satisfaction, $\delta_{k'}^{\mathcal{B}_{k'}}$;\\
\STATE PRB $b$ computes its preference extents on UE $\mu(b)$ and $k'$, $PE_b^{\mu(b)}$ and $PE_b^{k'}$;\\
\IF{$PE_b^{k'} \le PE_b^{\mu(b)}$}
\STATE $\mathcal{B'} = \mathcal{B'} - \{b\}$;
\ENDIF
\ENDFOR

\IF{$\mathcal{B'} = \emptyset$}
\STATE Obtain the new matching result, $\mu^*$;\\
\STATE \textbf{Break};\\
\ELSE
\STATE UE $k'$ randomly selects one PRB $b^*$ from $\mathcal{B'}$ with the lowest preference extent on its UE;
\STATE $\mathcal{B}_{\mu(b^*)} = \mathcal{B}_{\mu(b^*)} - \{b^*\}$
\STATE $\mathcal{B}_{k'} = \mathcal{B}_{k'} + \{b^*\}$;
\ENDIF
\ENDWHILE
\end{algorithmic}
\end{algorithm}

\begin{theorem}
The matching $\mu^*$ resulting from Alg. \ref{ma} is stable as long as the input matching $\mu$ is stable.
\end{theorem}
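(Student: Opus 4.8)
The plan is to argue by contradiction in the spirit of the two earlier stability proofs (for Alg.~\ref{ODA} and Alg.~\ref{md}), after reducing every preference extent to a one-parameter form. I would first reuse the ``cardinality-only'' observation from those proofs: since each allocated PRB is powered to meet the BER requirement exactly, every PRB assigned to UE $k$ delivers the same rate, so $U_k$ depends on $\mathcal{B}_k$ only through $|\mathcal{B}_k|$. Writing $m_k(n)$ for the marginal utility of the $n$-th PRB of UE $k$, Definition~\ref{def2} becomes, in any matching, the forward difference $PE_b^{k}=m_k(|\mathcal{B}_k|+1)$ when $b\notin\mathcal{B}_k$ and the backward difference $PE_b^{k}=m_k(|\mathcal{B}_k|)$ when $b\in\mathcal{B}_k$. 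In particular all PRBs of a fixed UE carry equal preference extent, which is exactly why sampling one PRB per UE in Alg.~\ref{ma} is lossless; and by the Sigmoid property already invoked, $m_k(n)$ is non-increasing in $n$ (the QoS-unsatisfied branches of Definition~\ref{def2} only raise a UE's priority above $1$ while remaining non-increasing in $|\mathcal{B}_k|$, so they are absorbed by the same monotonicity, as in the earlier proofs). With $h=\mu^*(b)$, a blocking pair $(b,k)$ in $\mu^*$ (Definitions~\ref{def6}--\ref{def7}) is then precisely the inequality $m_k(|\mathcal{B}_k^{\mu^*}|+1)>m_h(|\mathcal{B}_h^{\mu^*}|)$, which I must rule out for every ordered pair of distinct UEs with $h$ holding at least one PRB.

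Next I would record the mechanics of Alg.~\ref{ma}: only $k'$ gains PRBs while every pre-existing UE only loses them, hence $\mathcal{B}_h^{\mu^*}\subseteq\mathcal{B}_h^{\mu}$ for each existing $h$; and in each round the reassigned PRB is taken from the existing UE of currently smallest backward difference (the ``lowest preference extent on its UE''), and only when $k'$'s current forward difference strictly exceeds that smallest value. Consequently the running minimum of the existing backward differences is non-decreasing across rounds, $k'$'s forward difference is non-increasing, and the loop halts exactly when the latter no longer exceeds the former. I will write $G_0=\min_h m_h(|\mathcal{B}_h^{\mu}|)$ for the smallest backward difference in the input matching $\mu$.

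Assuming a blocking pair $(b,k)$ in $\mu^*$ and setting $h=\mu^*(b)$, I would split into three cases. If $k=k'$, the exit test of Alg.~\ref{ma} guarantees $m_{k'}(|\mathcal{B}_{k'}^{\mu^*}|+1)\le m_h(|\mathcal{B}_h^{\mu^*}|)$ for every existing holder $h$, directly contradicting the blocking inequality. If $k$ and $h$ are both pre-existing, then $b\in\mathcal{B}_h^{\mu^*}\subseteq\mathcal{B}_h^{\mu}$, so $\mu(b)=h$; when $k$ kept all its PRBs I would carry the $\mu$-stability inequality $m_k(|\mathcal{B}_k^{\mu}|+1)\le m_h(|\mathcal{B}_h^{\mu}|)$ across the monotone drop $|\mathcal{B}_h^{\mu^*}|\le|\mathcal{B}_h^{\mu}|$, and when $k$ lost a PRB I would evaluate $k$'s forward difference at the last round in which it was robbed, where $k$ realised the running minimum so that $m_k(|\mathcal{B}_k^{\mu^*}|+1)$ equals that minimum, and note that $h$'s backward difference at that round is at least the minimum yet, by diminishing returns, at most $m_h(|\mathcal{B}_h^{\mu^*}|)$. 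Both sub-cases contradict the blocking inequality.

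The hard part is the last case, where $k$ is pre-existing but the envied PRB $b$ is now held by the newcomer $k'$. Naive monotonicity fails here, because passing from $\mu$ to $\mu^*$ raises both $k$'s forward difference and $k'$'s backward difference, so no inequality transfers for free. My intended resolution leans on the steal-threshold invariant: every reassignment to $k'$ required $k'$'s then-current forward difference to exceed the running minimum of the existing backward differences, and that running minimum is bounded below by $G_0$; hence $k'$'s final backward difference $m_{k'}(|\mathcal{B}_{k'}^{\mu^*}|)$ strictly exceeds $G_0$. If $k$ retained all its PRBs, $\mu$-stability (with diminishing returns when the minimum is attained at $k$ itself) gives $m_k(|\mathcal{B}_k^{\mu^*}|+1)\le G_0<m_{k'}(|\mathcal{B}_{k'}^{\mu^*}|)$; if $k$ lost a PRB, the last-robbery argument again caps its forward difference by a running minimum and hence by $m_{k'}(|\mathcal{B}_{k'}^{\mu^*}|)$. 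Either way the blocking inequality is contradicted, so no blocking pair exists and $\mu^*$ is stable whenever $\mu$ is.
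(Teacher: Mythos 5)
Your proof is correct and rests on the same basic mechanism as the paper's --- contradiction via a hypothetical blocking pair, the observation that under fixed-BER power allocation all PRBs of a given UE are interchangeable so that $U_k$ depends only on $|\mathcal{B}_k|$, and the sigmoid-induced monotonicity of marginal utility --- but it is organized quite differently. The paper enumerates seven scenarios (plus subscenarios and subcases) according to whether $k$ or $\mu^*(b)$ equals $k'$ and whether each party's PRB set changed between $\mu$ and $\mu^*$, and handles each with an ad hoc comparison anchored at the time of the relevant reallocation. You instead collapse every preference extent onto the one-parameter marginal functions $m_k(n)$ and extract two explicit invariants of Alg.~\ref{ma}: the running minimum of the existing UEs' backward differences is non-decreasing (hence bounded below by its initial value $G_0$), and $k'$'s forward difference is non-increasing and must strictly exceed that running minimum at every steal. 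These invariants absorb the paper's Scenarios 2--6 into a single monotonicity-transfer step and its Scenario 7 subcases into the threshold bound $m_{k'}(|\mathcal{B}_{k'}^{\mu^*}|) > G_0 \ge m_k(|\mathcal{B}_k^{\mu^*}|+1)$; your handling of the filtered-out PRBs (those removed from $\mathcal{B'}$ because $PE_b^{k'} \le PE_b^{\mu(b)}$) is also consistent, since such PRBs have backward differences above $k'$'s forward difference and therefore above the surviving minimum. What your version buys is a shorter and more auditable argument whose only unproven ingredient --- that $m_k(n)$ is non-increasing across both the QoS-unsatisfied and QoS-satisfied branches of Definition~\ref{def2} --- is exactly the step the paper also leaves at the level of ``the property of the Sigmoid function.'' What the paper's version buys is that each scenario is checked directly against the pseudocode without requiring the interchangeability-of-PRBs reduction to be stated as a standing lemma, although the paper in fact uses that reduction implicitly in several places (e.g., when it equates $PE_{b}^{k}|_{\mu^*}$ with $PE_{b_k}^{k}|_{\mu^*}$).
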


\begin{proof}
\textbf{(Proof by contradiction)}
 Assume that there exist one blocking pair $(b, k)$ in the matching result $\mu^*$. Then according to Definition \ref{def6}, we have: $k \succ_b \mu^*(b)$. Then according to Definition \ref{def4}, we have: $PE_b^{k}|_{\mu^*} > PE_b^{\mu^*(b)}|_{\mu^*}$. There exists seven scenarios to be considered.
 
 \textbf{Scenario 1:} $k = k'$.
 
 In this scenario, from the process of Alg. \ref{ma}, we know that the algorithm won't stop if $PE_b^{k'} > PE_b^{\mu^*(b)}$. Hence, we have 
 $PE_b^{k'}|_{\mu^*} \le PE_b^{\mu^*(b)}|_{\mu^*}$, which is a contradiction.
 
 \textbf{Scenario 2:} $k \neq k'$, $\mu^*(b) \neq k'$, $\mathcal{B}_k^\mu = \mathcal{B}_k^{\mu^*}$ and $\mathcal{B}_{\mu^*(b)}^\mu = \mathcal{B}_{\mu^*(b)}^{\mu^*}$.
 
 In this scenario, we know that the PRBs allocated to UE $k$ and $\mu^*(b)$ are unchanged in matching $\mu^*$ compared with those in matching $\mu$. Hence, $PE_b^{k}|_{\mu^*} = PE_b^{k}|_{\mu}$ and $PE_b^{\mu^*(b)}|_{\mu^*} = PE_b^{\mu^*(b)}|_{\mu}$.  Since matching $\mu$ is stable, we know that $PE_b^{k}|_{\mu} \le PE_b^{\mu^*(b)}|_{\mu}$. Hence, $PE_b^{k}|_{\mu^*} \le PE_b^{\mu^*(b)}|_{\mu^*}$, which is a contradiction.
 
 \textbf{Scenario 3:} $k \neq k'$, $\mu^*(b) \neq k'$, $\mathcal{B}_k^\mu = \mathcal{B}_k^{\mu^*}$ and $\mathcal{B}_{\mu^*(b)}^\mu \neq \mathcal{B}_{\mu^*(b)}^{\mu^*}$.
 
 In this case, we know that no PRB of UE $k$ was reallocated to UE $k'$ during the matching process and at least one PRB of UE $\mu^*(b)$ was reallocated to UE $k'$. Hence, due to the property of sigmoid function and Definition 2, we have $PE_b^{\mu^*(b)}|_{\mu^*} \ge PE_b^{\mu^*(b)}|_{\mu}$. Since matching $\mu$ is stable, we know $PE_b^{k}|_{\mu} \le PE_b^{\mu^*(b)}|_{\mu}$. Because $\mathcal{B}_k^\mu = \mathcal{B}_k^{\mu^*}$, we know $PE_b^{k}|_{\mu} = PE_b^{k}|_{\mu^*}$. Hence, $PE_b^{\mu^*(b)}|_{\mu^*} \ge PE_b^{k}|_{\mu^*}$, which is a contradiction.
 
 \textbf{Scenario 4:} $k \neq k'$, $\mu^*(b) \neq k'$, $\mathcal{B}_k^\mu \neq \mathcal{B}_k^{\mu^*}$ and $\mathcal{B}_{\mu^*(b)}^\mu = \mathcal{B}_{\mu^*(b)}^{\mu^*}$.
 
 In this case, no PRB of UE $\mu^*(b)$ was reallocated during the matching process. And at least one PRB of UE $k$ was reallocated during the matching process. Suppose that the last reallocated PRB of UE $k$ was reallocated at time $t$ and the last reallocated PRB of UE $k$ is $b_k$. According to Alg. \ref{ma}, we know that $PE_b^{\mu^*(b)}(t) \ge PE_{b_k}^{k}(t)$. And according to Definition 2, the preference extent of PRB $b_k$ to UE $k$ stays unchanged after it was reallocated. Hence, we have $PE_{b}^{\mu^*(b)}|_{\mu^*} = PE_b^{\mu^*(b)}(t)$ and $PE_{b_k}^{k}|_{\mu^*} = PE_{b_k}^{k}(t)$. Moreover, because $PE_{b}^{k}|_{\mu^*} = PE_{b_k}^{k}|_{\mu^*}$, we know $PE_b^{\mu^*(b)}|_{\mu^*} \ge PE_{b}^{k}|_{\mu^*}$, which is a contradiction.
 
 \textbf{Scenario 5:} $k \neq k'$, $\mu^*(b) \neq k'$, $\mathcal{B}_k^\mu \neq \mathcal{B}_k^{\mu^*}$, $\mathcal{B}_{\mu^*(b)}^\mu \neq \mathcal{B}_{\mu^*(b)}^{\mu^*}$ and the last reallocated PRB of UE $k$ was reallocated after the last reallocated PRB of UE $\mu^*(b)$ was reallocated.
 
 In this scenario, suppose the last reallocated PRB of UE $k$ was reallocated at time $t$ and the last reallocated PRB of UE $k$ is $b_k$. According to Alg. \ref{ma}, we know that $PE_b^{\mu^*(b)}(t) \ge PE_{b_k}^{k}(t)$. Then similar to the discussion in Scenario 4, we know that $PE_b^{\mu^*(b)}|_{\mu^*} \ge PE_{b}^{k}|_{\mu^*}$, which is a contradiction.
 
 \textbf{Scenario 6:} $k \neq k'$, $\mu^*(b) \neq k'$, $\mathcal{B}_k^\mu \neq \mathcal{B}_k^{\mu^*}$, $\mathcal{B}_{\mu^*(b)}^\mu \neq \mathcal{B}_{\mu^*(b)}^{\mu^*}$ and the last reallocated PRB of UE $k$ was reallocated before the last reallocated PRB of UE $\mu^*(b)$ was reallocated.
 
 In this scenario, suppose the last reallocated PRB of UE $k$ was reallocated at time $t$ and the last reallocated PRB of UE $k$ is $b_k$. According to Alg. \ref{ma}, we know that $PE_b^{\mu^*(b)}(t) \ge PE_{b_k}^{k}(t)$. According to the property of sigmoid function and Definition 2, we know that $PE_b^{\mu^*(b)}|_{\mu^*} \ge PE_b^{\mu^*(b)}(t)$ since that at least one PRB of UE $\mu^*(b)$ was reallocated after time $t$. And $PE_{b_k}^{k}|_{\mu^*} = PE_{b_k}^{k}(t)$ because no more PRB of UE $k$ was reallocated after time $t$. Hence, $PE_b^{\mu^*(b)}|_{\mu^*} \ge PE_{b_k}^{k}|_{\mu^*} = PE_{b}^{k}|_{\mu^*}$, which is a contradiction.
 
 \textbf{Scenario 7:} $\mu^*(b) = k'$.
 
 Assume that the last UE whose PRB was reallocated to UE $k'$ was UE $k^*$ and the last reallocated PRB was PRB $b^*$. Suppose PRB $b^*$ was reallocated at time $t$. Then there exist two subscenarios.
 
 \textbf{Subscenario 1:} $k = k^*$.
  Then according to Alg. \ref{ma}, $PE_{b^*}^{k'}(t) > PE_{b^*}^{k}(t)$. According to Definition 2, $PE_{b^*}^{k'}|_{\mu^*} = PE_{b^*}^{k'}(t)$ and $PE_{b^*}^{k}|_{\mu^*} = PE_{b^*}^{k}(t)$. Moreover, we know $PE_{b}^{k'}|_{\mu^*} = PE_{b^*}^{k'}|_{\mu^*}$ and $PE_{b}^{k}|_{\mu^*} = PE_{b^*}^{k}|_{\mu^*}$. Hence, $PE_{b}^{k'}|_{\mu^*} > PE_{b}^{k}|_{\mu^*}$. Because $\mu^*(b) = k'$, $PE_{b}^{\mu^*(b)}|_{\mu^*} > PE_{b}^{k}|_{\mu^*}$, which is a contradiction.
 
 \textbf{Subscenario 2:} $k \neq k^*$.
 This subscenario could be further divided into two subcases.
 
 \textbf{Subcase 1:} $\mathcal{B}_k^\mu = \mathcal{B}_k^{\mu^*}$.
 In this subcase, we know no PRB of UE $k$ was reallocated in the matching process. Since $\mu$ is a stable matching, we know $PE_{b^*}^{k^*}|_\mu \ge PE_{b^*}^k|_\mu$. Then according to Alg. \ref{ma}, $PE_{b^*}^{k'}(t) > PE_{b^*}^{k^*}(t)$. According to Definition 2, $PE_{b^*}^{k'}|_{\mu^*} = PE_{b^*}^{k'}(t)$ and $PE_{b^*}^{k^*}|_{\mu^*} = PE_{b^*}^{k}(t)$. And we know $PE_{b^*}^k|_\mu = PE_{b^*}^k|_{\mu^*}$ because $\mathcal{B}_k^\mu = \mathcal{B}_k^{\mu^*}$. Hence, we have $PE_{b^*}^{k'}|_{\mu^*} > PE_{b^*}^{k}|_{\mu^*}$. Moreover, because $PE_{b}^{k'}|_{\mu^*} = PE_{b^*}^{k'}|_{\mu^*}$ and $PE_{b}^{k}|_{\mu^*} = PE_{b^*}^{k}|_{\mu^*}$, we have $ PE_{b}^{\mu^*(b)}|_{\mu^*} = PE_{b}^{k'}|_{\mu^*} > PE_{b}^{k}|_{\mu^*}$, which is a contradiction.
 
 \textbf{Subcase 2:} $\mathcal{B}_k^\mu \neq \mathcal{B}_k^{\mu^*}$.
 In this subcase, we know that at least one PRB of UE $k$ was reallocated in the matching process. Suppose that the last reallocated PRB of UE $k$ was reallocated at time $t'$ and the last reallocated PRB of UE $k$ is $b'$. Then we know $t' < t$. According to Alg. \ref{ma}, we know that at time $t'$, $b'$ has the lowest preference extent on its matched UE, $k$. Hence, we have $PE_{b'}^k(t')\le PE_{b^*}^{k^*}(t')$. And according to Definition 2, $PE_{b}^k|_{\mu^*} = PE_{b}^k(t') = PE_{b'}^k(t')$ since no more PRB of UE $k$ has been reallocated after time $t'$. Moreover, we know $PE_{b^*}^{k^*}(t') \le PE_{b^*}^{k^*}(t)$ since no PRB was reallocated to UE $k^*$ between $t'$ and $t$. And since PRB $b^*$ was reallocated to UE $k'$ at time $t$, we know $PE_{b^*}^{k^*}(t) < PE_{b^*}^{k'}(t)$. According to Definition 2, $PE_{b^*}^{k'}|_{\mu^*} = PE_{b^*}^{k'}(t)$ and $PE_{b^*}^{k^*}|_{\mu^*} = PE_{b^*}^{k^*}(t)$. Hence, we have $ PE_{b}^k|_{\mu^*} < PE_{b}^{k'}|_{\mu^*} = PE_{b}^{\mu^*(b)}|_{\mu^*}$, which is a contradiction.
 
 From the discussion, we can conclude that there exist no blocking pair in matching $\mu^*$. Hence, matching $\mu^*$ is stable.
 \end{proof}

\begin{property}
The time complexity of Alg. \ref{ma} is $\mathcal{O}(K_j|B_{k'}|)$.
\end{property}

\begin{proof}
From Alg. \ref{ma}, we know that $|B_{k'}|$ iterations are needed for the while loop. And in each iteration, the most time-consuming part is the  first for loop, which requires $K_j$ iterations, each requires $3$ operations of $\mathcal{O}(1)$. Hence, the time complexity of Alg. \ref{ma} is $\mathcal{O}(K_j|B_{k'}|)$.
\end{proof}

\section{Interactive Optimization for Joint User Association and Resource Allocation}
The proposed interactive optimization algorithm for joint user association and resource allocation is shown in Alg. \ref{IO}. From Alg. \ref{IO}, we can find that it contains three stages, which are initialization, correction and optimization. The purpose of introduction of correction stage is to reassociate the UEs which are not allocated any PRB by their associated PBS with their nearest MBS, since that these PBSs don't have enough PRBs to support their demands.

In the optimization stage, user association and resource allocation are solved iteratively in an interactive way until convergence is achieved. Moreover, user association in the optimization stage is performed in a heuristic way in order to avoid exhaustive search, whose time complexity is too high. The first step is to calculate the set of UEs which could be reassociated, which are in fact the UEs associated with PBSs. Moreover, the indicator matrix, $\Psi$, which is an $M$-by-$N$ matrix, is initialized. The indicator matrix $\Psi$ will be used later for guiding the selection of UE which will be reassociated. 

The following part is a while loop which ends when there exists no UE which could be reassociated. Each iteration of the while loop contains a for loop, which loops over all the set of MBSs. For each MBS, the set of UEs covered by it, $\mathcal{K'}_j$, is calculated. Then UE $k \in \mathcal{K'}_j$ is searched which has the lowest utility, and its corresponding PBS $i$ is found. It is probably that the burden of PBS $i$ is very high. Hence, the UEs in the set $\mathcal{K'}_j \cap \mathcal{K}_i$ could be considered for reassociation. If the indicator $\psi_{j,i}$ equals 0, it means that the UE with lowest utility in $\mathcal{K'}_j \cap \mathcal{K}_i$ could be considered for reassociation. Then UE $k$ will be reassociated to see whether this brings any gain of the sum utilities of all the UEs. If not, the former UE association policy will be restored. If $\psi_{j,i} = 1$, it means the UE with lowest utility in $\mathcal{K'}_j \cap \mathcal{K}_i$ has been tried for reassociation but this failed to bring any sum utility gain. Hence, on the contrary, the UE with highest utility in $\mathcal{K'}_j \cap \mathcal{K}_i$ will be tried for reassociation. If this also doesn't bring any sum utility gain of all the UEs, all the UEs in the set $\mathcal{K'}_j \cap \mathcal{K}_i$ will not be considered for reassociation anymore.

\begin{algorithm*}[hbtp]
\caption{Interactive Optimization Algorithm for Joint User Association and Resource Allocation}
\label{IO}
\begin{multicols}{2}
\begin{algorithmic}
\STATE \textbf{1. Initialization:}\\
\STATE Invoke Alg. \ref{UAInitialization} to perform user association and get the user association matrix $X$;\\
\FOR{$j \in \mathcal{F}$}
\STATE Invoke Alg. \ref{ODA} to allocate PRBs to UEs associated with BS $j$;
\STATE Invoke Alg. \ref{PRA} to allocate power to PRBs of BS $j$;
\ENDFOR
\STATE \textbf{2. Correction:}\\
\FOR{$k \in \mathcal{K}$}
\STATE Calculate the set of PRBs allocated to UE $k$, $\mathcal{B}_k$;\\
\IF{$\mathcal{B}_k = \emptyset$ and UE $k$ is associated with a pico base station}
\STATE Reassociate UE $k$ with the MBS which is nearest to it;
\ENDIF
\ENDFOR
\FOR{$j \in \mathcal{F}$}
\STATE Invoke Alg. \ref{ODA} to allocate PRBs to UEs associated with BS $j$;
\STATE Invoke Alg. \ref{PRA} to allocate power to PRBs of BS $j$;
\ENDFOR
\STATE \textbf{3. Optimization:}\\
\STATE Calculate the sum of UEs' utilities, $\sum_{k \in \mathcal{K}}U_k$;\\
\STATE Calculate the set of UEs which could be reassociated, $\mathcal{K'}$;\\
\STATE Initialize indicator matrix $\Psi = O_{M \times N}$;\\
\WHILE{$\mathcal{K'} \neq \emptyset$}
\FOR{$j \in \mathcal{F}_{MBS}$}
\STATE Calculate the set of UEs in $\mathcal{K'}$ located within the coverage area of MBS $j$, $\mathcal{K'}_j$;
\IF{$\mathcal{K'}_j = \emptyset$}
\STATE \textbf{Continue};\\
\ENDIF
\STATE Find the UE $k$ in $\mathcal{K'}_j$ with the lowest utility;
\STATE Find the PBS $i$ which UE $k$ is associated with;
\IF{$\psi_{j,i} = 0$}
\STATE $k' = k$;\\
\ELSE
\STATE Find the UE $k' \in \mathcal{K'}_j \cap \mathcal{K}_i$  with the highest utility;\\
\ENDIF
\STATE Reassociate UE $k'$ with MBS $j$;
\STATE Invoke Alg. \ref{md} to generate a new PRB-UE matching for PBS $i$;
\STATE Invoke Alg. \ref{PRA} to allocate power to PRBs of PBS $i$;
\STATE Invoke Alg. \ref{ma} to generate a new PRB-UE matching for MBS $j$;
\STATE Invoke Alg. \ref{PRA} to allocate power to PRBs of MBS $j$;
\STATE Recalculate the sum of UEs' utilities, $\sum_{k \in \mathcal{K}}U'_k$;

\IF{$\sum_{k \in \mathcal{K}}U'_k < \sum_{k \in \mathcal{K}}U_k$}
\STATE Reassociate UE $k'$ with PBS $i$;
\STATE Recover the former PRB-UE matching and power allocation results of MBS $j$ and PBS $i$;\\
\IF{$\psi_{j,i} = 0$}
\STATE $\psi_{j,i} = 1$;\\
\ELSE
\STATE $\mathcal{K'}_j = \mathcal{K'}_j - \mathcal{K'}_{j} \cap \mathcal{K}_i$;
\ENDIF
\ELSE
\STATE $\mathcal{K'}_j = \mathcal{K'}_j - \{k'\}$;\\
\ENDIF
\ENDFOR
\FOR{$j \in \mathcal{F}_{MBS}$}
\STATE Invoke Alg. \ref{PRA} to allocate power to PRBs of BS $j$;\\
\ENDFOR
\ENDWHILE
\end{algorithmic}
\end{multicols}
\end{algorithm*}

\section{Performance Evaluation}
\subsection{Simulation Settings}
We consider a horizontal area of $2000$ meter $\times$ $ 2000$ meter. The main parameter settings are summarized in Table \ref{ParameterSettings}. We define an area as a sparse area if it is covered by no PBS. And we define an area as a dense area if it is covered by a PBS. For simplicity, we suppose that there exist two types of UEs, which are enhanced mobile broadband (eMBB) UEs and ultra-reliable and low-latency communications (uRLLC) UEs. The UEs within the same type have the same QoS requirements. However, different UEs within the same type could have different utility functions. For simulation, we suppose that if UE $k$ is an eMBB UE, $w_k^1$ is uniformly chosen from 0.8 to 0.9 and $w_k^2 = 1 - w_k^1$. And we suppose that if UE $k$ is a uRLLC UE, $w_k^2$ is uniformly chosen from 0.8 to 0.9 and $w_k^1 = 1 - w_k^2$.

To better evaluate the performance of the proposed IOA algorithm, we choose different settings of number of PBSs and maximum amount of power per PBS. Specifically, we choose three different settings of number of PBSs, which are 9, 18 and 27. In addition, we choose ten different settings of maximum amount of transmit power per PBS, from 0.1 W to 1.0 W.

\begin{table}

\centering

\caption{Summary of the Main Parameter Settings.}

\label{ParameterSettings}

\begin{tabular}{c|c}

  \hline
  Parameter & Value\\
  \hline
  Coverage radius of MBS & 500 m\\
  \hline
  Coverage radius of PBS & 100 m\\
  \hline
  Number of MBSs & 9\\
  \hline
  Density of eMBB UEs in a sparse area & 8 / $\textrm{km}^\textrm{2}$\\
  \hline
  Density of uRLLC UEs in a sparse area & 8 / $\textrm{km}^\textrm{2}$\\
  \hline
  Density of eMBB UEs in a dense area & 100 / $\textrm{km}^\textrm{2}$\\
  \hline
  Density of uRLLC UEs in a dense area & 100 / $\textrm{km}^\textrm{2}$\\
  \hline
  Data transmission rate requirement & \multirow{2}{*}{100 Mbps}\\
  of eMBB UEs & \\
  \hline
  Latency requirement of eMBB UEs & 50 ms\\
  \hline
  Maximum BER requirement of eMBB UEs & $\textrm{1} \times \textrm{10}^{-4}$\\
  \hline
  Maximum BER requirement of uRLLC UEs & $\textrm{1} \times \textrm{10}^{-6}$\\
  \hline
  Data transmission rate requirement & \multirow{2}{*}{1 Mbps}\\
  of uRLLC UEs & \\
  \hline
  Latency requirement of uRLLC UEs & 20 ms\\
  \hline
  Total bandwidth per base station & 100 MHz\\
  \hline
  Subcarrier spacing & 30 KHz\\
  \hline
  Total number of PRBs per base station & 273\\
  \hline
  Frequency reuse factor & 1/3\\
  \hline
  Maximum transmit power per MBS & 40 W\\
  \hline
  Carrier frequency of MBS & 3.5 GHz\\
  \hline
  Carrier frequency of PBS & 60 GHz\\
  \hline
  \multirow{2}{*}{Distance-dependent path loss of MBS (dB)} & 36$\textrm{log}_{\textrm{10}}$\textrm{(d) + 29.358,}\\
   & d in m\\
  \hline
  \multirow{2}{*}{Distance-dependent path loss of PBS (dB)} & 44$\textrm{log}_{\textrm{10}}$\textrm{(d) + 43.985,}\\
   & d in m\\
  \hline
  Small-scale fading model & Rayleigh\\
  \hline
  AWGN power & -174 dBm/Hz\\
  \hline
  Packet size & 1000 bit\\
  \hline
  Average latency from server to base station & \multirow{2}{*}{30 ms}\\
  of eMBB UEs & \\
  \hline
  Average latency from server to base station & \multirow{2}{*}{15 ms}\\
  of uRLLC UEs & \\
  \hline
  Packet arrival rate of eMBB UEs & 80000 / s\\
  \hline
  Packet arrival rate of uRLLC UEs & 800 / s\\
  \hline
  Propagation latency & 1 $\mu\textrm{s}$\\
  \hline
  
\end{tabular}

\end{table}

The distributions of base stations and UEs under the three settings of number of PBSs is illustrated in Fig. \ref{distirbution}.

\begin{figure*}[htbp]
  \begin{center}
  \includegraphics[width=7.0in]{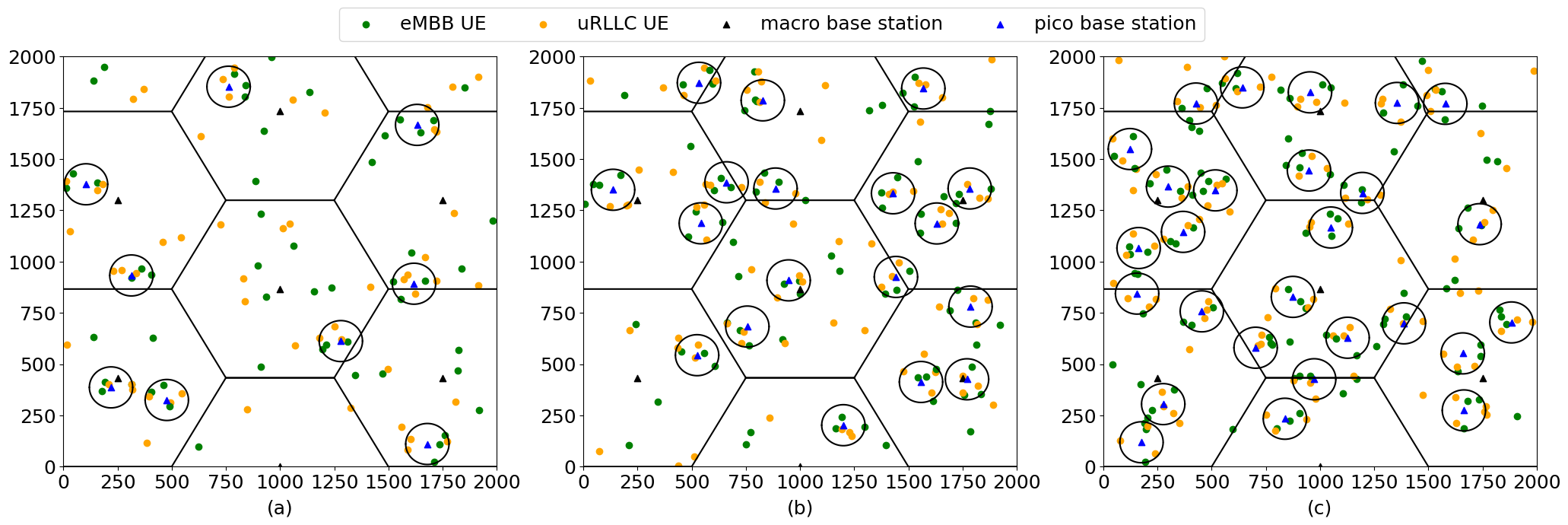}\\
  \caption{Illustration of distributions of base stations and UEs under (a) number of PBSs = 9, (b) number of PBSs = 18, (c) number of PBSs = 27. The MBSs are distributed in a planned cellular manner and the PBSs are distributed randomly. The UEs are classified into two groups, which are eMBB and uRLLC UEs, and are distributed randomly.}\label{distirbution}
  \end{center}
\end{figure*}

\subsection{Convergence Behavior of IOA Algorithm}
The convergence behavior of IOA algorithm under different settings is illustrated in Fig. \ref{convergence}. From Fig. \ref{convergence}, we can find that under all settings of number of PBSs and maximum transmit power per PBS, the proposed IOA algorithm could achieve convergence within a limited number of iterations.

\begin{figure*}[htbp]
  \begin{center}
  \includegraphics[width=7.0in]{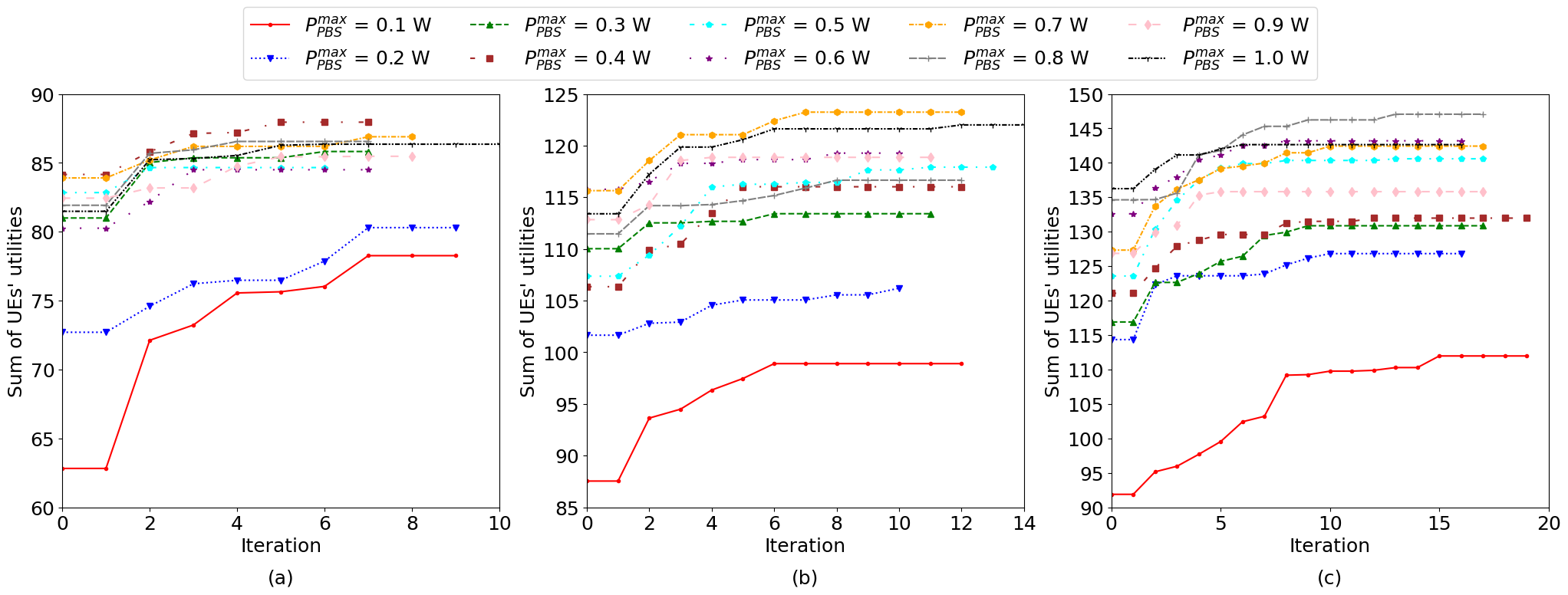}\\
  \caption{Illustration of convergence of IOA algorithm under (a) number of PBSs = 9, (b) number of PBSs = 18, (c) number of PBSs = 27 with ten different settings of maximum transmit power per PBS, ranging from 0.1 W  to 1.0 W. Under each setting, convergence could be achieved by IOA algorithm within a limited number of iterations.}\label{convergence}
  \end{center}
\end{figure*}

\subsection{Performance Comparison}
The whole user association and resource allocation task could be divided into three subtasks, which are user association, PRB allocation and power allocation, respectively. For user association, we select three comparison algorithms, which are random, max-Reference signal received power (RSRP) \cite{liu2017backhaul} and max-biased RSRP \cite{8600150}  user association algorithms. For PRB allocation, we select four comparison algorithms, which are uniform, round robin \cite{barayan2013performance}, maximum sum rate \cite{zhang2004multiuser} and max-min fair \cite{cai2008comparision} PRB allocation algorithms. For power allocation, we select two comparison algorithms, which are uniform  and water-filling \cite{ling2013fast} power allocation algorithms.

By combining the comparison algorithms of the three tasks, seven baseline algorithms are created, as summarized in Table \ref{ComparisonAlgorithms}. Specifically, for max-biased RSRP user association algorithm, we set the coefficient which is multiplied by RSRP of PBSs 
to be 100. 

\begin{table*}

\centering

\caption{Summary of the Baseline Algorithms.}

\label{ComparisonAlgorithms}

\begin{tabular}{c|c}

  \hline
  Baseline algorithm & Composition\\
  \hline
  BA1 & Random user association + uniform PRB allocation + uniform power allocation\\
  \hline
  BA2 & Max-RSRP user association + round robin PRB allocation + water-filling power allocation\\
  \hline
  BA3 & Max-RSRP user association + maximum sum rate PRB allocation + water-filling power allocation\\
  \hline
  BA4 & Max-RSRP user association + max-min fair PRB allocation + water-filling power allocation\\
  \hline
  BA5 & Max-biased RSRP user association + round robin PRB allocation + water-filling power allocation\\
  \hline
  BA6 & Max-biased RSRP user association + maximum sum rate PRB allocation + water-filling power allocation\\
  \hline
  BA7 & Max-biased RSRP user association + max-min fair PRB allocation + water-filling power allocation\\
  \hline
  
\end{tabular}

\end{table*}

In order to better conduct performance evaluation, we select two performance metrics, which are average utility and UE satisfaction ratio. Average utility is defined as the average utility of UEs in the network. UE satisfaction ratio is defined as the ratio of UEs whose QoS requirements are satisfied. 

The comparison of average utility under different settings is shown in Fig. \ref{averageUtility}. From Fig. \ref{averageUtility}, we can find that the proposed IOA algorithm achieves the highest average utility under all the settings of number of PBSs and maximum transmit power per PBS. We can also find that BA1 achieves the second best results when number of PBSs is equal to 9 or 18. And BA1 achieves nearly the highest average utilities among all the baseline algorithms when number of PBSs is equal to 27. This is because the random user association adopted in BA1 enables the PBSs to serve more UEs to achieve load balancing. 

Moreover, we can find that BA5 outperforms BA2, BA6 outperforms BA3 and BA7 outperforms BA4. The reason is that max-biased RSRP user association allows more UEs to be served by PBSs compared with max-RSRP user association, which almost associates all the UEs with MBSs. In addition, we can find that BA3 and BA6 achieves the worst results. This is because the maximum sum rate PRB allocation algorithm allocates each PRB to the UE with the best channel condition to maximize the sum rate, which sacrifices the UEs with worse channel conditions. On the contrary, round robin and max-min fair PRB allocation could better ensure fairness among UEs with different channel conditions.

\begin{figure*}[htbp]
  \begin{center}
  \includegraphics[width=7.0in]{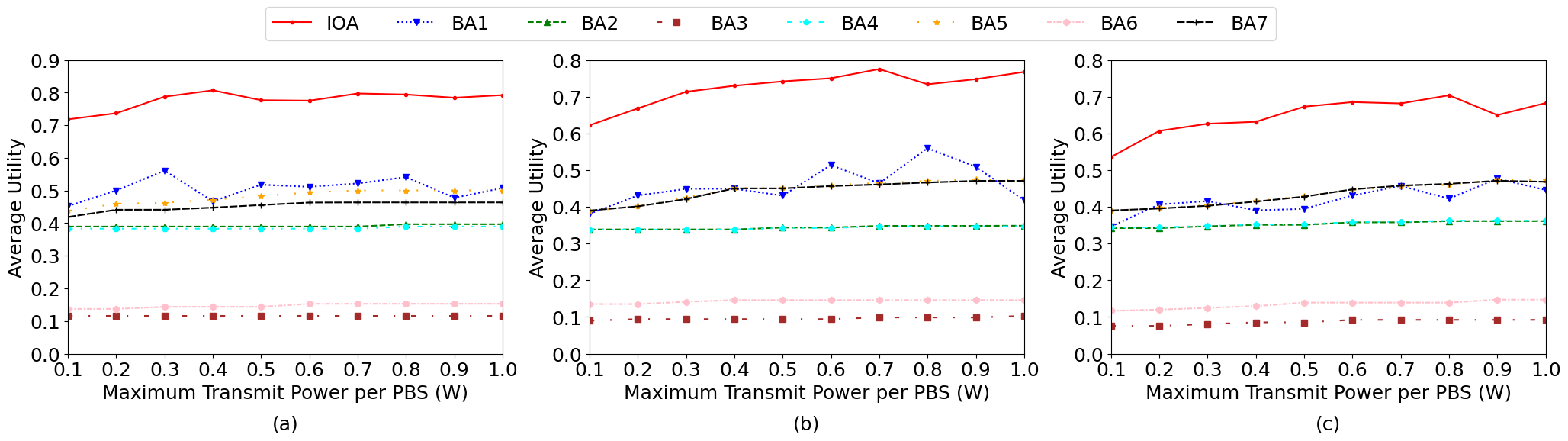}\\
  \caption{Comparison of average utility among IOA and seven baseline algorithms under (a) number of PBSs = 9, (b) number of PBSs = 18, (c) number of PBSs = 27 with ten different settings of maximum transmit power per PBS ranging from 0.1 W to 1.0 W. Under each setting, IOA algorithm achieves the highest average utility.}\label{averageUtility}
  \end{center}
\end{figure*}

The comparison of UE satisfaction ratio under different settings is shown in Fig. \ref{UEsatisfactionRatio}. From Fig. \ref{UEsatisfactionRatio}, we can find that the proposed IOA algorithm achieves the highest UE satisfaction ratio under all the settings of number of PBSs and maximum transmit power per PBS. Comparing Fig. \ref{UEsatisfactionRatio} with Fig. \ref{averageUtility}, we can find that there exists relevance between average utility and UE satisfaction ratio, which means that a higher average utility usually indicates a higher UE satisfaction ratio.

\begin{figure*}[htbp]
  \begin{center}
  \includegraphics[width=7.0in]{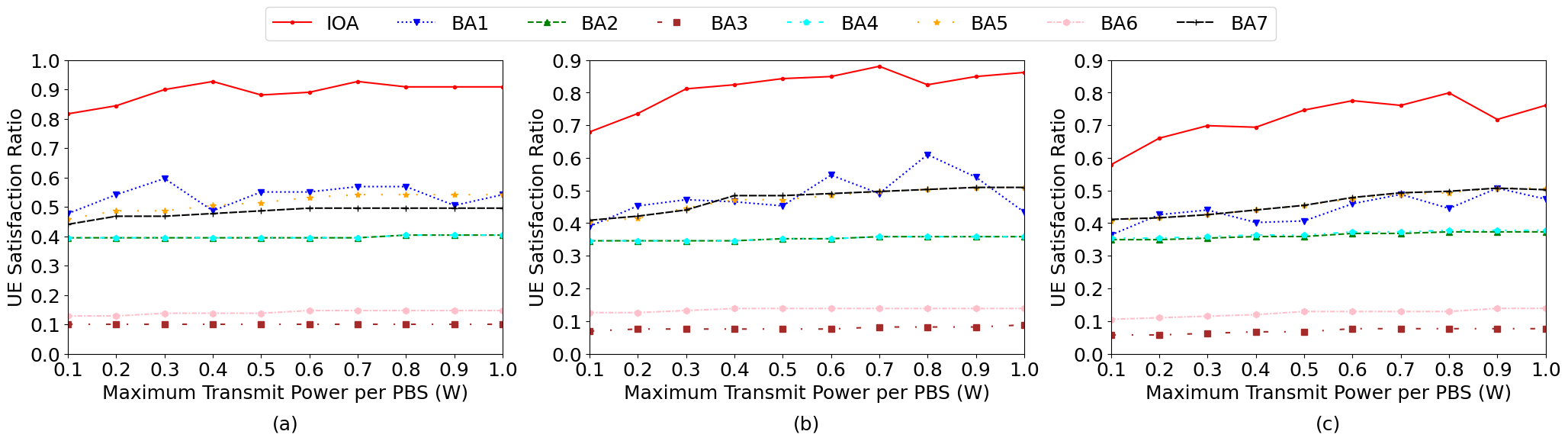}\\
  \caption{Comparison of UE satisfaction ratio among IOA and seven baseline algorithms under (a) number of PBSs = 9, (b) number of PBSs = 18, (c) number of PBSs = 27 with ten different settings of maximum transmit power per PBS ranging from 0.1 W to 1.0 W. Under each setting, IOA algorithm achieves the highest UE satisfaction ratio.}\label{UEsatisfactionRatio}
  \end{center}
\end{figure*}

The simulation results confirm that our proposed IOA algorithm achieves much better performance compared with the baseline algorithms.

\section{Conclusion}
 In this paper, the problem of application-specific objective-based joint user association and resource allocation for tailored QoS provisioning in downlink 6G HetNets was investigated. We formulated this problem as an MINLP problem, aiming at maximizing the sum of UEs' application-specific objectives. For solving this non-convex and NP-hard problem, an interactive optimization algorithm was proposed which solves the problems of user association and resource allocation iteratively in an interactive way until convergence is achieved. Extensive experiments were conducted, whose results confirm that our proposed IOA algorithm outperforms baseline algorithms in terms of both average utility and UE satisfaction ratio.

\ifCLASSOPTIONcaptionsoff
  \newpage
\fi

\bibliographystyle{IEEEtran}
\bibliography{IEEEabrv,Bibliography}

\begin{IEEEbiography}[{\includegraphics[width=1in,height=1.25in,clip,keepaspectratio]{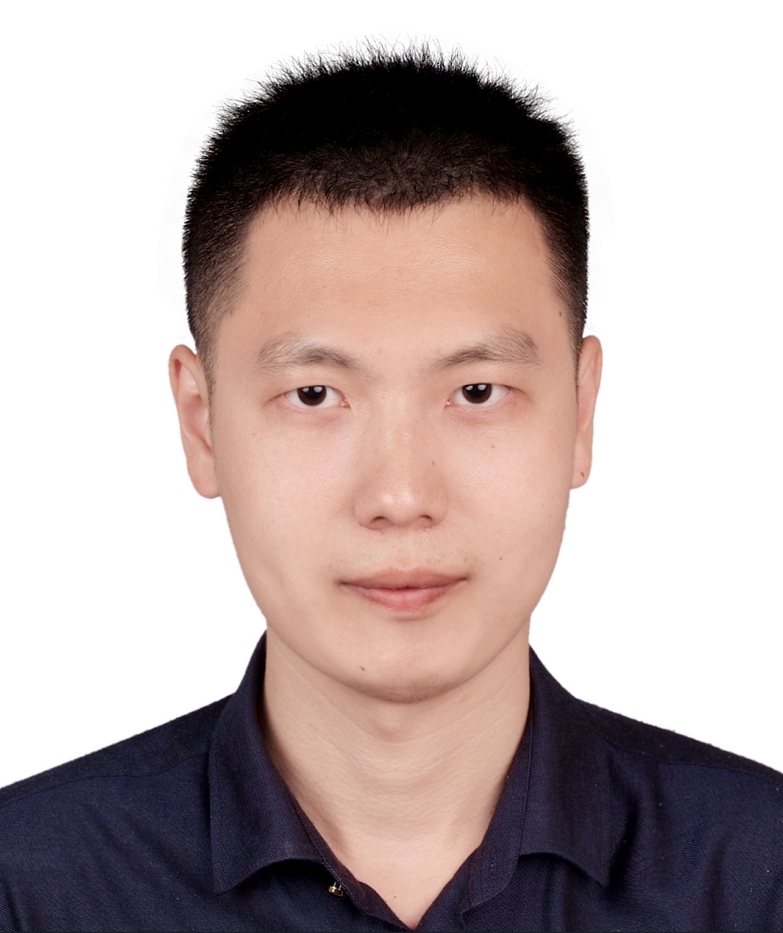}}]{Yongqin Fu}
    (Graduate Student Member, IEEE) received the B.E. degree in information engineering from Southeast University, Nanjing, China, in 2015 and the M.E. degree in computer science and technology from Zhejiang University, Hangzhou, China, in 2019. He is currently working towards the Ph.D. degree at the Department of Electrical and Computer Engineering, Western University, London, Ontario, Canada. 
    
    His current research interests include intelligent and customized resource management and network cooperation in beyond 5G and 6G networks, as well as sensor data analysis in Internet of Things (IoT) systems.
\end{IEEEbiography}

\begin{IEEEbiography}[{\includegraphics[width=1in,height=1.25in,clip,keepaspectratio]{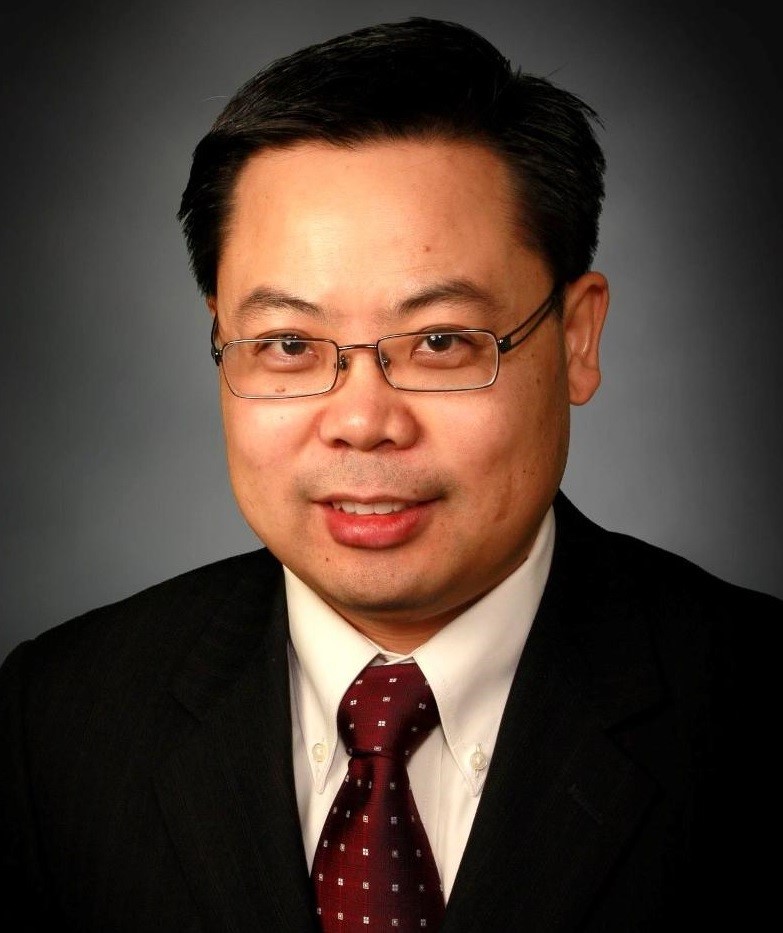}}]{Xianbin Wang}
   (Fellow, IEEE) received his Ph.D. degree in electrical and computer engineering from the National University of Singapore in 2001.
   
He is a Professor and a Tier-1 Canada Research Chair in 5G and Wireless IoT Communications with Western University, Canada. Prior to joining Western University, he was with the Communications Research Centre Canada as a Research Scientist/Senior Research Scientist from 2002 to 2007. From 2001 to 2002, he was a System Designer at STMicroelectronics. His current research interests include 5G/6G technologies, Internet of Things, communications security, machine learning, and intelligent communications. He has over 500 highly cited journals and conference papers, in addition to over 30 granted and pending patents and several standard contributions.

Dr. Wang is a Fellow of the Canadian Academy of Engineering and a Fellow of the Engineering Institute of Canada. He has received many prestigious awards and recognitions, including the IEEE Canada R. A. Fessenden Award, Canada Research Chair, Engineering Research Excellence Award at Western University, Canadian Federal Government Public Service Award, Ontario Early Researcher Award, and nine Best Paper Awards. He was involved in many IEEE conferences, including GLOBECOM, ICC, VTC, PIMRC, WCNC, CCECE, and CWIT, in different roles, such as General Chair, TPC Chair, Symposium Chair, Tutorial Instructor, Track Chair, Session Chair, and Keynote Speaker. He serves/has served as the Editor-in-Chief, Associate Editor-in-Chief, and editor/associate editor for over ten journals. He was the Chair of the IEEE ComSoc Signal Processing and Computing for Communications (SPCC) Technical Committee and is currently serving as the Central Area Chair for IEEE Canada.

\end{IEEEbiography}

\vfill

\end{document}